\documentclass[aps,prb,notitlepage,nofootinbib,floatfix,twocolumn,superscriptaddress, longbibliography]{revtex4-2}
\usepackage{dsfont}
\usepackage{amsmath,amsthm}
\usepackage{amssymb}
\usepackage{amsfonts}
\usepackage{mathtools}
\usepackage[tight]{subfigure}
\usepackage{enumitem}
\usepackage{soul}
\usepackage{cancel}
\usepackage{multirow}
\usepackage{tikz}
\usepackage{pifont} 
\usepackage{xspace}
\usepackage{comment}
\usepackage{physics}
\usepackage{graphicx}
\usepackage{dcolumn}
\usepackage{bm}
\usepackage[breaklinks]{hyperref}
\hypersetup{colorlinks=true, linkcolor=blue, citecolor=blue, filecolor=blue, urlcolor=blue}
\usepackage{ulem}
\usepackage[dvipsnames]{xcolor}

\theoremstyle{definition}

\newtheorem*{definition*}{Definition} 
\newtheorem{theorem}{Theorem}
\newtheorem*{theorem*}{Theorem}

\newtheorem*{lemma*}{Lemma}
\newtheorem*{proposition*}{Proposition}

\begin{document}

\title{State $k$-designs from Hamiltonian evolution}

\author{Shengxian Hou}
\affiliation{School of Physics, Peking University, Beijing 100871, China}

\author{Zong-Yue Hou}
\affiliation{School of Physics, Peking University, Beijing 100871, China}

\author{Zhi-Cheng Yang}
\thanks{Corresponding author}
\email{zcyang19@pku.edu.cn}
\affiliation{School of Physics, Peking University, Beijing 100871, China}
\affiliation{Center for High Energy Physics, Peking University, Beijing 100871, China}

\date{\today}

\begin{abstract}

We study the generation of state \(k\)-designs from time evolution under a fixed Hamiltonian. Specifically, we consider the ensemble $\mathcal{E}=\left\{e^{-iHt}\ket{\psi_0}\ \middle|\ t\sim \mathrm{Unif}[0,T],\ \ket{\psi_0}\sim \mathcal{E}'\right\}$, where the initial states are sampled from an ensemble \(\mathcal{E}'\). For Hamiltonians drawn from the Gaussian unitary ensemble, we derive a simple relation between the frame potential of the evolved ensemble \(\mathcal{E}\) and that of the initial ensemble \(\mathcal{E}'\) in the large evolution time limit. This relation shows that \(\mathcal{E}\) forms an exact state \(k\)-design in the thermodynamic limit as long as \(\mathcal{E}'\) forms a state 1-design. Remarkably, we further show, both analytically and numerically, that time evolution under a simple nonintegrable mixed-field Ising Hamiltonian can generate approximate state \(k\)-designs with high precision, starting from product states in an appropriately chosen Pauli basis. We also analyze the finite-$T$ correction and find it scales as $O(1/T)$. To reduce the evolution time, we propose an $M$-step quench protocol that suppresses this correction to $O(1/T^M)$, which is also verified numerically. We then extend our analysis to unitary ensembles, deriving an analogous recursion relation for the unitary frame potential. Our results elucidate the mechanisms underlying recent proposals for generating unitary \(k\)-designs through sequential quantum quenches in a unified manner.
  
\end{abstract}

\maketitle

\textit{Introduction.--} A foundational idea of statistical mechanics is that deterministic dynamics can generate statistical uniformity. In classical Hamiltonian systems, an initially localized distribution in phase space does not literally relax to a smooth equilibrium distribution at the fine-grained level, since phase-space volume is preserved by Liouville's theorem. Instead, under chaotic mixing, it is stretched and folded into increasingly fine structures, so that coarse-grained observations become indistinguishable from sampling the accessible energy shell. This dynamical route to equilibrium underlies the replacement of long-time dynamical averages by phase-space ensemble averages~\cite{Arnold1968,Cornfeld1982}.

Quantum mechanics sharpens this question. The space of pure states is the projective Hilbert space, equipped with the unitarily invariant Haar measure, which plays the role of the equal-weight ensemble over pure quantum states. Haar-random states therefore provide the quantum analogue of an ideal equal-weight state ensemble~\cite{Haar1933}. However, exact Haar randomness is an excessively strong requirement. Physical observables and many information-theoretic protocols usually depend only on finitely many copies of the state, or equivalently on finite moments of the state distribution. This motivates the notion of a quantum state \(k\)-design: an ensemble of pure states whose \(k\)-copy moments agree with those of the Haar ensemble. State designs therefore provide a finite-resolution notion of Hilbert-space uniformity, in which the design order \(k\) quantifies how deeply an ensemble mimics Haar randomness~\cite{Renes_2004,Scott_2008,Ambainis2007}.

This perspective suggests a natural form of quantum mixing: starting from simple states and evolving under a chaotic Hamiltonian, can one generate an ensemble that is indistinguishable from Haar randomness to finite moment order? Besides probing a form of quantum ergodicity stronger than ordinary thermalization, dynamically generated state designs are useful resources for tasks ranging from randomized benchmarking to shadow tomography~\cite{Knill_2008,Huang_2020}.

Two complementary approaches to this question have emerged. In the projected-ensemble setting, measurements on part of an evolved many-body state induce an ensemble on the remaining subsystem that can form high-order state designs, a phenomenon known as deep thermalization~\cite{Ho_2022,Ippoliti_2023,PRXQuantum.4.010311,Choi_2023,Mok_2025}. Temporal ensembles instead draw their randomness from the evolution time of a fixed Hamiltonian~\cite{PhysRevX.14.041051,Ghosh_2025}. In this setting, energy conservation prevents the orbit of a generic initial state from reproducing the Haar ensemble~\cite{PhysRevX.14.041051,mok2026naturestingyuniversalityscrooge}. Although Haar-like features of time-evolved states have been investigated through entanglement statistics and related observables~\cite{Ghosh_2025,ghosh2025randomizationtimesquantumchaotic}, the conditions under which temporal ensembles form state designs remain less understood. Parallel questions have also been studied for unitary ensembles~\cite{Zhou_2026,Zhou2026,Sun2026,Cui2025}.

\begin{figure*}[!t]
    \centering
    \includegraphics[width=\textwidth]{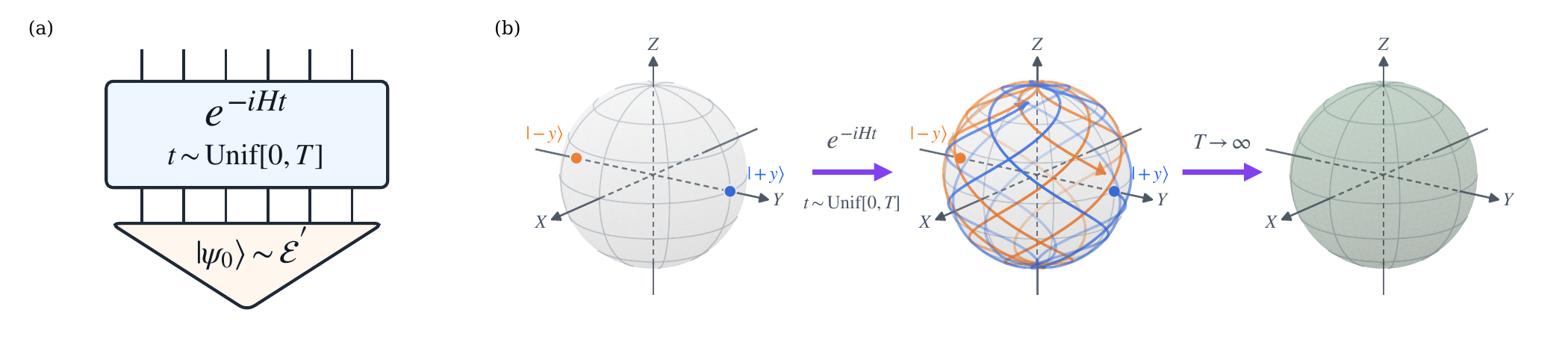}
    \caption{(a) The setup considered in this work, $\mathcal{E}=\{ e^{-iHt}\ket{\psi _{0}} \mid t\sim \mathrm{Unif}[0,T],\ket{\psi _{0}} \sim \mathcal{E}' \}$. (b) Illustration of the formation of state designs under Hamiltonian evolution. Each initial state in $\mathcal{E}'$ generates a trajectory in the Hilbert space visualized on a Bloch sphere. In the long-time limit, these trajectories cover the Hilbert space uniformly and the evolved ensemble $\mathcal{E}$ forms a state $k$-design.}
    \label{fig:demonstration}
\end{figure*}

In this Letter, we resolve this question by constructing temporal ensembles from suitably chosen initial-state ensembles. Specifically, we study $\mathcal{E} =\{e^{-iHt}\ket{\psi_0}\ |\ t\sim \mathrm{Unif}[0,T],\ \ket{\psi_0}\sim \mathcal{E}'\}$, where the initial state $\ket{\psi_0}$ is sampled from an ensemble $\mathcal{E}'$ [see Fig.~\ref{fig:demonstration}(a)]. This framework is highly general: $\mathcal{E}'$ may consist of a single state or may itself be a temporal ensemble generated by evolution under another Hamiltonian. For a fixed Hamiltonian $H$ sampled from the Gaussian unitary ensemble (GUE), we derive a simple relation between the frame potentials of $\mathcal{E}$ and $\mathcal{E}'$ in the long-time limit. This relation shows that choosing $\mathcal{E}'$ to be a state $1$-design is sufficient for $\mathcal{E}$ to approach a state $k$-design in the thermodynamic limit. Remarkably, we further demonstrate, both analytically and numerically, that evolution under a simple nonintegrable mixed-field Ising Hamiltonian generates approximate state $k$-designs with high precision when initialized in product states drawn from an appropriately chosen Pauli basis. A detailed analysis of the temporal convergence shows that the leading finite-$T$ correction scales as $O(T^{-1})$. To accelerate convergence, we propose an $M$-step quench protocol involving two mixed-field Ising Hamiltonians, which suppresses this correction to $O(T^{-M})$, as is confirmed numerically. Finally, we extend our analysis to unitary ensembles and derive an analogous recursion relation for the unitary frame potential, providing a unified explanation of several recent constructions of unitary $k$-designs from sequential quantum quenches~\cite{Zhou2026,Sun2026}.

\textit{Frame potential.--} In this Letter, we use $D$ to denote the Hilbert space dimension and $N$ to denote system size. We quantify the proximity of an ensemble (state or unitary) to the Haar ensemble using frame potentials. The $k$-th  frame potential of a state ensemble $\mathcal{E}$ is defined as $F^{(k)}_{\mathcal{E}}:= \mathbb{E}_{\ket{\psi},\ket{\phi}\sim \mathcal{E}}|\braket{\psi}{\phi}|^{2k}$, which is minimized by the Haar state ensemble with $F^{(k)}_{\mathrm{Haar}}=k!/[D(D+1)\cdots(D+k-1)]=k!/D^k+O(1/D^{k+1})$. Similarly, the $k$-th  frame potential of a unitary ensemble $\mathcal{E}$ is defined as $\mathcal{F}^{(k)}_{\mathcal{E}}:=\mathbb{E}_{U,V\sim \mathcal{E}}|\mathrm{tr}[U^{\dagger}V]|^{2k}$, which is minimized by the Haar unitary ensemble with $\mathcal{F}^{(k)}_{\mathrm{Haar}}=k!$ for $D\geq k$. A state (unitary) ensemble forms a state (unitary) $k$-design if and only if its $k$-th frame potential coincides with the corresponding Haar value. Throughout this Letter we work in the regime $D\gg k$. The relations between frame potentials and the additive errors of approximate designs are listed in End Matter.

\textit{State ensembles.--} We study state ensembles obtained by evolving an initial-state ensemble under a fixed time-independent Hamiltonian. Specifically, we consider the ensemble
\begin{equation}
    \mathcal{E}=\{ e^{-iHt}\ket{\psi _{0}} \mid t\sim \mathrm{Unif}[0,T],\ket{\psi _{0}} \sim \mathcal{E}' \},
    \label{defstate}
\end{equation}
where $t$ is uniformly sampled from the time interval $[0,T]$, and the initial state $\ket{\psi _{0}}$ is sampled from a state ensemble $\mathcal{E}'$. We will assume the eigenenergies of $H$ satisfy the $k$-th no-resonance condition: any two subsets of eigenenergies $\{ E_{n_{1}},\cdots,E_{n_{k}} \}$ and $\{E_{m_{1}},\cdots,E_{m_{k}}  \}$ satisfy $\sum _{a=1}^{k}E_{n_{a}}=\sum _{a=1}^{k}E_{m_{a}}$ if and only if there exists a permutation $\sigma \in S_{k}$ such that $n_{a}=m_{\sigma(a)}$ for $a=1,2,\cdots,k$. Such no-resonance conditions are expected to hold  for generic chaotic Hamiltonians~\cite{10.21468/SciPostPhys.15.4.165,huang2020instabilitylocalizationtranslationinvariantsystems}. For now, we work in the long time limit $T\to\infty$.

If the initial ensemble $\mathcal{E}'$ consists of a single state $\ket{\Psi _{0}}$, then $\mathcal{E}$ is called a temporal ensemble~\cite{PhysRevX.14.041051}.  It has been proved that if  $\ket{\Psi_{0}}=\frac{1}{\sqrt{D}}\sum _{n}\ket{E_{n}}$, i.e., the initial state has a uniform amplitude on all eigenstates, $\mathcal{E}$ forms an approximate state $k$-design with additive error $O(k^{2}/D)$~\cite{Nakata_2014}. However, for  generic initial states, temporal ensembles form approximate Scrooge $k$-designs with density matrix $\sigma _{\mathrm{diag}}=\sum _{n}|\langle E_{n}|\Psi_{0}\rangle|^{2}\ket{E_{n}}\bra{E_{n}}$ instead of Haar state $k$-designs~\cite{mok2026naturestingyuniversalityscrooge}.

In this work, we show that  drawing initial states from a simple ensemble enables the formation of state $k$-designs. Assuming the $k$-th no-resonance condition and taking $T\to \infty$, the frame potential of $\mathcal{E}$ is 
\begin{equation}
    F^{(k)}_{\mathcal{E}}=k!\sum _{\vec{n}}\frac{1}{\Omega(\vec{n})}\left( \bra{E_{n_{1}}\cdots E_{n_{k}}} \rho ^{(k)}_{\mathcal{E}'}\ket{E_{n_{1}}\cdots E_{n_{k}}}  \right) ^{2},
    \label{statefp}
\end{equation}
where $\vec{n}\in \{1,\dots,D\}^{\otimes k}$, $\rho^{(k)}_{\mathcal{E}'}:={\mathbb{E}}_{|\psi\rangle\sim\mathcal{E}'}[\left(|\psi\rangle\langle\psi|\right)^{\otimes k}]$ is the \(k\)-th moment operator, and $\Omega(\vec{n})=\prod _{i=1}^{D}r_{i}!$, where $r_i=\sum_{a=1}^k \delta_{in_a}$ is the multiplicity of $i$ in $\vec{n}$. This term arises from using the no-resonance condition to eliminate the dependence on eigenenergies. A detailed derivation is given in the Supplemental Material (SM)~\cite{SM}. For Hamiltonians drawn from GUE, we prove the following result:

\begin{theorem}
    Consider the state ensemble $\mathcal{E}$ defined in Eq.~\eqref{defstate}. Assuming  $H$ satisfies the $k$-th no-resonance condition and taking the limit $T\to\infty$, the GUE-averaged frame potential is
    \begin{equation}
        \mathop{\mathbb{E}}\limits_{H\sim \mathrm{GUE}}F^{(k)}_{\mathcal{E}}=\frac{k!}{D^k}\left(1+\sum_{m=1}^k \binom{k}{m}F^{(m)}_{\mathcal{E}'}\right)+\text{subleading}.
        \label{stategue}
    \end{equation}
    The subleading term is at least $O(1/D)$ smaller than the leading term. The proof is given in the SM~\cite{SM}.
    \label{theostate}
\end{theorem}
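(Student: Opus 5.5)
Starting from Eq.~\eqref{statefp}, I would use the fact that the GUE eigenbasis is Haar distributed and independent of the spectrum. The no-resonance condition has already removed all dependence on the eigenvalues, so averaging over $H\sim\mathrm{GUE}$ reduces to averaging over a Haar unitary $U$ with $\ket{E_n}=U\ket{n}$. This gives
\begin{equation*}
\mathop{\mathbb{E}}_{H}F^{(k)}_{\mathcal{E}}=k!\sum_{\vec n}\frac{1}{\Omega(\vec n)}\,\mathop{\mathbb{E}}_{U\sim\mathrm{Haar}}\Bigl(\bra{\vec n}U^{\dagger\otimes k}\rho^{(k)}_{\mathcal{E}'}U^{\otimes k}\ket{\vec n}\Bigr)^2 .
\end{equation*}
Writing the square as a trace over two copies, each summand is $\mathrm{tr}\bigl[(\rho^{(k)}_{\mathcal{E}'})^{\otimes 2}\,U^{\otimes 2k}(\ket{\vec n}\bra{\vec n})^{\otimes 2}U^{\dagger\otimes 2k}\bigr]$. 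The Haar average is therefore a $2k$-fold twirl of the rank-one projector $P_{\vec n}=(\ket{\vec n}\bra{\vec n})^{\otimes 2}$, which I would evaluate by Weingarten calculus on $S_{2k}$:
\begin{equation*}
\sum_{\sigma,\tau\in S_{2k}}\mathrm{Wg}(\sigma^{-1}\tau,D)\,\mathrm{tr}[P_{\vec n}\sigma]\,\mathrm{tr}[(\rho^{(k)}_{\mathcal{E}'})^{\otimes 2}\tau].
\end{equation*}

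Next I would organize the sum by leading order in $1/D$. Generic $\vec n$ (all entries distinct) dominate: there are about $D^k$ of them and $\Omega=1$. For such $\vec n$, $\mathrm{tr}[P_{\vec n}\sigma]=1$ exactly when $\sigma$ preserves the labels, meaning it lies in $S_2^{\times k}$ (it may swap the two copies of each label). At leading order $\mathrm{Wg}(\sigma^{-1}\tau)\approx D^{-2k}\delta_{\sigma\tau}$. So each generic term contributes $D^{-2k}\sum_{\tau\in S_2^{\times k}}\mathrm{tr}[(\rho^{(k)})^{\otimes2}\tau]$, and the sum over $\vec n$ gives a prefactor $D^{k}\cdot D^{-2k}=D^{-k}$. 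Non-generic $\vec n$ are a fraction $O(1/D)$ of the total, and the off-diagonal Weingarten terms are also suppressed by $1/D$, so both go into the subleading part. To justify this I need the traces $\mathrm{tr}[(\rho^{(k)})^{\otimes 2}\tau]$ to be at most $O(1)$, which holds because they are overlaps of normalized states.

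Now I evaluate $\mathrm{tr}[(\rho^{(k)}_{\mathcal{E}'})^{\otimes 2}\tau]$ for $\tau\in S_2^{\times k}$. Suppose $\tau$ swaps the two copies on a subset $A$ with $|A|=m$. Expanding $\rho^{(k)}=\mathbb{E}_{\psi}(\ket\psi\bra\psi)^{\otimes k}$ and its second copy as $\mathbb{E}_{\phi}$, each swapped factor gives $|\braket{\psi}{\phi}|^2$ and each unswapped factor gives $1$. The result is $\mathbb{E}_{\psi,\phi}|\braket{\psi}{\phi}|^{2m}=F^{(m)}_{\mathcal{E}'}$, with $F^{(0)}=1$. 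Summing over subsets yields $\sum_{m=0}^{k}\binom{k}{m}F^{(m)}_{\mathcal{E}'}$, and with the overall $k!$ this reproduces Eq.~\eqref{stategue}.

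The main obstacle is controlling the subleading part. The correction terms must stay smaller than the leading term by $O(1/D)$ even when the $F^{(m)}_{\mathcal{E}'}$ are small; for example, for a 1-design $F^{(m)}\sim D^{-m}$, so the leading term is itself only $O(1)\cdot k!/D^k$. I would bound the corrections using $|\mathrm{Wg}(\pi)|=O(D^{-2k-|\pi|})$. In addition, for any $\tau$ the trace $\mathrm{tr}[(\rho^{(k)})^{\otimes2}\tau]$ is bounded by a product of single-state overlap quantities, each of which is at most the $m=0$ term, equal to $1$. Together with the counting of coincident indices in $\vec n$ (a fraction $O(k^2/D)$), this should give the claimed relative $O(1/D)$ suppression. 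The $m=0$ term already contributes $k!/D^k$, which ensures the leading term never vanishes.
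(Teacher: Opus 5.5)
Your proposal is correct and follows essentially the same route as the paper. Both arguments reduce the GUE average to a Haar average over the eigenbasis, apply Weingarten calculus on $S_{2k}$, and identify the leading terms as $\pi=\sigma$ lying in the subgroup of swaps $(a,a+k)$ (your $S_2^{\times k}$, the paper's $E_{2k}$). These terms produce $\sum_{m=0}^{k}\binom{k}{m}F^{(m)}_{\mathcal{E}'}$, and every other term is bounded by $O(D^{-k-1})$ using $F^{(m)}_{\mathcal{E}'}\le 1$, set against the guaranteed $k!/D^k$ from the $m=0$ term.

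The only differences are in bookkeeping. You twirl the projector $(\ket{\vec n}\bra{\vec n})^{\otimes 2}$ rather than the state moments, and you sort the corrections by coincident indices in $\vec n$ rather than by whether $\pi\in E_{2k}$. The two organizations are equivalent.
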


Theorem~\ref{theostate} implies that $\mathcal{E}$ forms a state $k$-design if the first-order frame potential of the initial ensemble $\mathcal{E}'$ satisfies $F^{(1)}_{\mathcal{E}'}=O(1/D)$. This follows from the fact that the state frame potential is nonincreasing with respect to its order $k$. Therefore as long as $F^{(1)}_{\mathcal{E}'}=O(1/D)$, the leading order of ${\mathbb{E}}_{H\sim \mathrm{GUE}}F^{(k)}_{\mathcal{E}}$ will be $k!/D^k$, matching the leading order of the frame potential of the Haar ensemble. In particular, we can choose $\mathcal{E}'$ to be a state 1-design, since its first-order frame potential is $1/D$. The simplest choice of a state 1-design is a complete orthonormal basis, e.g. computational basis states. Another simple construction is  randomly applying Pauli strings to an arbitrary fixed initial reference state $\ket{0}$, which gives $\mathcal{E}'=\{P\ket{0}\mid P\sim \mathcal{P}_N\}$, where $\mathcal{P}_N=\{I,X,Y,Z\}^{\otimes N}$ is the set of  $N$-qubit Pauli strings. This follows from the fact that $\mathcal{P}_N$ forms a unitary 1-design.

Theorem~\ref{theostate} can be viewed as a recursion relation for the frame potential, since the evolved ensemble $\mathcal{E}$ can in turn be used as the initial ensemble for a second Hamiltonian evolution. This recursive structure immediately implies that two independent GUE Hamiltonian quenches are sufficient to generate state $k$-designs, starting from a \textit{fixed} reference state.  First, taking the  ensemble $\mathcal{E}'$ to consist only of a single state $\ket{\psi_0}$, we obtain $\mathcal{E}_1=\{e^{-iH_1t_1}\ket{\psi_0}\mid t_1\sim\mathrm{Unif}[0,T]\}$. Using Theorem~\ref{theostate}, we get
\begin{equation}
    \mathop{\mathbb{E}}\limits_{H_1\sim \mathrm{GUE}}F^{(k)}_{\mathcal{E}_1}=\frac{2^k k!}{D^k}+\text{subleading},
\end{equation}
which is not even a 1-design, consistent with the previous result that generic temporal ensembles are not state $k$-designs, but we have $\mathop{\mathbb{E}}\limits_{H_1\sim \mathrm{GUE}}F^{(1)}_{\mathcal{E}_1}=2/D=O(1/D)$. Then we apply the second quench to get $\mathcal{E}_2=\{e^{-iH_2t_2}\ket{\psi}\mid  t_2\sim\mathrm{Unif}[0,T],\ket{\psi}\sim \mathcal{E}_1 \}$. Using Theorem~\ref{theostate} again gives
\begin{equation}
    \mathop{\mathbb{E}}\limits_{H_1,H_2\sim \mathrm{GUE}}F^{(k)}_{\mathcal{E}_2}=\frac{k!}{D^k}+\text{subleading},
\end{equation}
matching the leading order of the Haar ensemble.

\textit{Local Hamiltonians.--} While the GUE provides a convenient starting point for obtaining analytical results, such Hamiltonians are highly nonlocal and thus physically unrealistic. We now ask whether a fixed, physically realistic chaotic Hamiltonian can generate state $k$-designs when the initial ensemble is chosen appropriately. As a first step, let us take $\mathcal{E}'$ to be an orthonormal product-state basis $\mathcal{E}'=\{\ket{m}\}_{m=1}^{D}$. For qubit systems, simple examples are the product bases formed by bitstrings in the $X$-, $Y$-, or $Z$-basis. Substituting $\rho^{(k)}_{\mathcal{E}'}=D^{-1}\sum_{m=1}^{D}\ket{m}\bra{m}^{\otimes k}$ into Eq.~\eqref{statefp}, we obtain
\begin{equation}
    F^{(k)}_{\mathcal{E}}=\frac{k!}{D^{2}}\sum _{m,m'}\sum _{\vec{n}}\frac{1}{\Omega(\vec{n})}\prod _{a=1}^{k} |  \langle E_{n_{a}}|m\rangle |^{2}|\langle E_{n_{a}}|m'\rangle|^{2}.
\end{equation}
We show in the SM that if the overlaps $x_{nm}:=|\langle E_n|m\rangle|^2$ obey the Porter-Thomas distribution and may be treated as independent random variables, then the leading contribution is precisely the Haar value, $F^{(k)}_{\mathcal{E}}=k!/D^k$~\cite{PT}. Porter-Thomas statistics thus provide a simple sufficient condition for design formation. Nevertheless, for local Hamiltonians, eigenstates near the spectral edges are typically highly structured and can deviate strongly from Porter-Thomas statistics. Moreover, different states in $\mathcal{E}'$ generally have different energy expectation values and therefore place their spectral weight in different regions of the many-body spectrum. Consequently, for a generic choice of product-state basis, the resulting ensemble $\mathcal{E}$ need not approach a state $k$-design under evolution by a local Hamiltonian~\cite{Cui2025}. In particular, we prove the following no-go theorem:
\begin{theorem}
    Consider a state ensemble $\mathcal{E}=\{e^{-iHt}\ket{\psi_0}\mid t\sim\mathcal{D},\ket{\psi_0}\sim\mathcal{E}'\}$, where $\mathcal{D}$ is an arbitrary distribution of evolution times. Suppose $\mathrm{tr}(H)=0$ and expand $H$ in the Pauli basis as $H=\sum_{P\in\mathcal{P}_N}h_PP$. Then $\mathcal{E}$ cannot be an approximate state $2$-design with additive error
    \begin{equation}
        \epsilon<\frac{\bigg|\mathop{\mathbb{E}}\limits_{\ket{\psi_0} \sim \mathcal{E}'}\bra{\psi_0} H\ket{\psi_0} ^{2}-\frac{1}{D+1}\sum _{P\in \mathcal{P}_{N}}h_{P}^{2}\bigg|}{\lVert H\rVert_{\infty}^{2}}.
    \end{equation}
    The proof is given in the SM~\cite{SM}.
    \label{theonogo}
\end{theorem}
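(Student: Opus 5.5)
Since time evolution under $H$ conserves energy, the natural witness is the observable $O=H\otimes H$ acting on two copies. For every $t$ we have $\bra{\psi_0}e^{iHt}He^{-iHt}\ket{\psi_0}=\bra{\psi_0}H\ket{\psi_0}$. Hence, for any time distribution $\mathcal{D}$, the second moment of $\mathcal{E}$ satisfies
\[
\mathrm{tr}\big[(H\otimes H)\rho^{(2)}_{\mathcal{E}}\big]=\mathop{\mathbb{E}}_{\ket{\psi_0}\sim\mathcal{E}'}\bra{\psi_0}H\ket{\psi_0}^2 .
\]
The plan is to compare this with the Haar value and then turn the discrepancy into a lower bound on the additive error.

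First compute the Haar side. We have $\rho^{(2)}_{\mathrm{Haar}}=(I+\mathrm{SWAP})/[D(D+1)]$, so
\[
\mathrm{tr}[(H\otimes H)\rho^{(2)}_{\mathrm{Haar}}]=\frac{(\mathrm{tr}H)^2+\mathrm{tr}(H^2)}{D(D+1)}.
\]
With $\mathrm{tr}H=0$ and $\mathrm{tr}(H^2)=D\sum_P h_P^2$, which follows from orthogonality of Pauli strings ($\mathrm{tr}(PQ)=D\delta_{PQ}$), this equals $\frac{1}{D+1}\sum_P h_P^2$. The difference of the two expectation values is therefore exactly the numerator appearing in Theorem~\ref{theonogo}.

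Next convert this into a bound on $\epsilon$. I will take the additive error of an approximate $2$-design to be $\epsilon=\lVert\rho^{(2)}_{\mathcal{E}}-\rho^{(2)}_{\mathrm{Haar}}\rVert_1$, the convention stated in End Matter; if it carries a factor $1/2$, the constant changes accordingly. Hölder's inequality gives
\[
|\mathrm{tr}[O(\rho_1-\rho_2)]|\le\lVert O\rVert_\infty\lVert\rho_1-\rho_2\rVert_1,
\]
and $\lVert H\otimes H\rVert_\infty=\lVert H\rVert_\infty^2$. Hence $\epsilon\ge|\Delta|/\lVert H\rVert_\infty^2$, where $\Delta$ is the numerator. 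An ensemble with smaller $\epsilon$ would contradict this, which proves the claim.

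The only real subtlety is matching the paper's precise definition of additive error, which fixes the normalization (trace norm versus half trace norm). If that definition were instead relative error or a diamond-norm variant, I would bound the trace norm by it and proceed as above. Everything else is a direct computation.
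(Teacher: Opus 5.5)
Your proposal is correct and follows the paper's proof essentially step for step. The paper also uses $H\otimes H$ as the witness, uses energy conservation to reduce $\mathrm{tr}[H^{\otimes 2}\rho^{(2)}_{\mathcal{E}}]$ to $\mathbb{E}_{\ket{\psi_0}\sim\mathcal{E}'}\bra{\psi_0}H\ket{\psi_0}^2$, bounds the deviation from the Haar value by $\varepsilon\lVert H\rVert_\infty^2$, and evaluates the Haar side with $\rho^{(2)}_{\mathrm{Haar}}=(\mathbb{I}+\mathrm{SWAP})/[D(D+1)]$ and $\mathrm{tr}(H)=0$; your normalization worry is resolved in your favor, since the End Matter defines the additive error with the full trace norm.
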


We apply Theorem~\ref{theonogo} to the mixed-field Ising Hamiltonian with open boundary conditions,
\begin{equation}
    H=J\sum_{i=1}^{N-1} Z_iZ_{i+1}+h_x\sum_{i=1}^N X_i+h_z\sum_{i=1}^N Z_i.
    \label{eq:MFIM}
\end{equation}
For $\mathcal{E}'$ consisting of bitstrings in the $Z$-basis, we have $\mathbb{E}_{\ket{z}\sim\mathcal{E}'}\langle z|H|z\rangle^2=\sum_{P\in\{I,Z\}^{\otimes N}}h_P^2$, which is not small. Similarly for bitstrings in the $X$-basis. Nevertheless, product states in the $Y$-basis are not ruled out by Theorem~\ref{theonogo}. We therefore take $\mathcal{E}'$ to be the $Y$-basis bitstrings. In Fig.~\ref{fig:N_scaling}(a), we show that for fixed nonintegrable parameters in Eq.~\eqref{eq:MFIM}, the resulting long-time ensemble shows excellent agreement with Haar random ensemble, for $k$ up to 5. Moreover, we demonstrate in Fig.~\ref{fig:N_scaling}(b) that the relative error $\delta F^{(k)}_{\mathcal{E}}:=(F^{(k)}_{\mathcal{E}}-F^{(k)}_{\mathrm{Haar}})/F^{(k)}_{\mathrm{Haar}}$ decreases exponentially with system size.

The special role of the $Y$-basis for the mixed-field Ising model can be understood intuitively by noting $\langle y|H|y\rangle=0$ and $\langle y|H^2|y\rangle=D^{-1}\operatorname{tr}(H^2)$, which hold for every $Y$-basis bitstring $\ket{y}$. Moreover, as shown in the SM, the energy population of any such state converges to that of the infinite-temperature ensemble~\cite{SM}. By eigenstate thermalization~\cite{Srednicki1994, Rigol2008, DAlessio2016}, any product state in the $Y$-basis thermalizes to infinite temperature under Hamiltonian~\eqref{eq:MFIM}. Therefore, the $Y$-basis states predominantly sample highly excited eigenstates in the bulk of the spectrum, where chaotic behavior and random-matrix-like eigenstate statistics are expected~\cite{atas2015quantumisingmodeltransverse}.

\begin{figure}[!t]
    \centering
    \includegraphics[width=\linewidth]{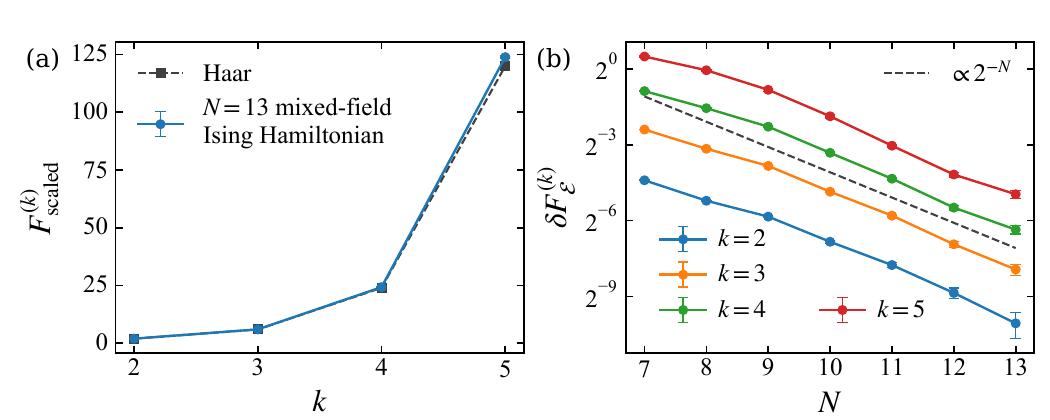}
    \caption{(a) The scaled frame potential $F^{(k)}_{\mathrm{scaled}}:=D(D+1)\cdots(D+k-1)F^{(k)}_{\mathcal{E}}$ for the ensemble $\mathcal{E}=\{e^{-iHt}\ket{y}\mid t\sim\mathrm{Unif}[0,T],\ket{y}\sim Y\text{-basis} \}$ in the limit $T\to\infty$, where $H$ is the mixed-field Ising Hamiltonian~\eqref{eq:MFIM} with parameters $(J,h_x,h_z)=(1,0.9045,0.809)$ and system size $N=13$. The black dots indicate the Haar values $k!$. (b) The relative error compared to the Haar ensemble $\delta F^{(k)}_{\mathcal{E}}:= (F^{(k)}_{\mathcal{E}}-F^{(k)}_{\mathrm{Haar}})/F^{(k)}_{\mathrm{Haar}}$ for different $N$ and $k$, which shows exponential convergence upon increasing system sizes.} 
    \label{fig:N_scaling}
\end{figure}

\textit{Temporal convergence.--} We now analyze the convergence to state $k$-designs at large but finite evolution times. We begin by rewriting the frame potential of the ensemble defined in Eq.~\eqref{defstate} as
\begin{equation}
    F^{(k)}_{\mathcal{E}}=\frac{2}{T}\int ^{T}_{0}d\tau\left( 1-\frac{\tau}{T} \right)f ^{(k)}_{\mathcal{E}'}(\tau),
\end{equation}
where $f ^{(k)}_{\mathcal{E}'}(\tau)=\mathbb{E}_{y,y'\sim\mathcal{E}' }|\bra{y}e^{iH\tau}\ket{y'}|^{2k}$. For the $Y$-basis initial ensemble, one has $f^{(k)}_{\mathcal{E}'}(0)=1/D$. At short times, the dominant contribution comes from $y=y'$. Expanding to order $\tau^2$, we have $f^{(k)}_{\mathcal{E}'}(\tau)\approx \frac{1}{D}e^{-k\langle H^2\rangle \tau^2}$, with a characteristic decay timescale $\tau^*\sim 1/\sqrt{N}$. The leading finite-$T$ correction to $F^{(k)}_{\mathcal{E}}$ is therefore controlled by the short-time region of the integral and scales as $O(\frac{\tau ^{*}}{DT})$. Since $F^{(k)}_{\rm Haar}$ is of order $D^{-k}$ and $\tau^*$ scales at most polynomially with $N$, this correction becomes subleading when $T=\Omega(D^{k})$, scaling exponentially with system size. The predicted $T^{-1}$ convergence is confirmed numerically in Fig.~\ref{fig:T_scaling} for $k=3$ and $k=4$. For more general initial ensemble $\mathcal{E}'$, the leading finite-$T$ correction scales as $O(F^{(k)}_{\mathcal{E}'}\tau ^{*}/T)$. In the SM, we establish this scaling explicitly for GUE Hamiltonians~\cite{SM}.

To reduce the evolution time, we consider a protocol with two Hamiltonians $H_{1}$ and $H_{2}$ applied alternately. An $M$-step quench yields the ensemble $\mathcal{E}_{M}=\{ e^{-iH_{1/2}t_{M}}\cdots e^{-iH_{1}t_{3}}e^{-iH_{2}t_{2}}e^{-iH_{1}t_{1}}\ket{\psi}\mid t_{1},\cdots,t_{M}\sim \mathrm{Unif}[0,T],\ket{\psi}\sim \mathcal{E}' \}$ where the Hamiltonian in the last step depends on whether $M$ is even or odd. According to the previous argument, the finite-$T$ correction to this ensemble is expected to scale as $O(F^{(k)}_{\mathcal{E}'}/T^{M})$, and we confirm this prediction numerically. In Fig.~\ref{fig:T_scaling}, we choose two mixed-field Ising Hamiltonians with different parameters to construct one-, two- and three-step quench ensembles. The relative error decreases with $T$ precisely as predicted above, scaling as $T^{-M}$ for $M$ steps.  To make the total evolution time linear in system size $N$, we can choose $M=N$ and $T=2^k$, such that the total evolution time is $MT=2^kN$.

\begin{figure}[!t]
    \centering
    \includegraphics[width=\linewidth]{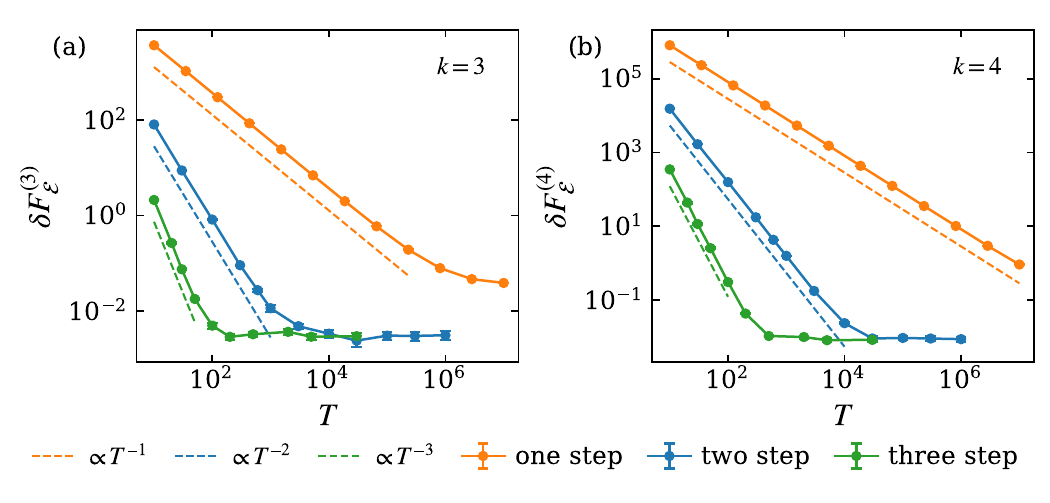}
    \caption{The relative error compared to the Haar ensemble $\delta F^{(k)}_{\mathcal{E}}$ at different values of $T$ for one-, two- and three-step quenches for $k=3,4$. The initial ensemble $\mathcal{E}'$ is chosen as the $Y$-basis. For multiple quenches, we use two mixed-field Ising Hamiltonians~\eqref{eq:MFIM} with parameters $(J,h_x,h_z)=(1,0.9045,0.809)$ for $H_1$ and $(J,h_x,h_z)=(1,-1.05,0.5)$ for $H_2$. We take system size $N=10$. The three dashed lines are proportional to $T^{-1}$, $T^{-2}$ and $T^{-3}$ respectively.}
    \label{fig:T_scaling}
\end{figure}

\textit{Unitary ensembles.--} We now turn to the generation of unitary $k$-designs from Hamiltonian evolution. Analogously, we consider an initial unitary ensemble $\mathcal{E}'$ composed with  Hamiltonian evolution acting either on the left or on the right. This yields two ensembles,
\begin{equation}
    \mathcal{E}_L=\{e^{-iHt}V\mid t\sim\mathrm{Unif}[0,T], V\sim \mathcal{E}' \}
    \label{uleft}
\end{equation}
and
\begin{equation}
    \mathcal{E}_R=\{Ve^{-iHt}\mid t\sim\mathrm{Unif}[0,T], V\sim \mathcal{E}' \}.
    \label{uright}
\end{equation}
As in the state-ensemble case, we derive a recursion relation for the unitary frame potential, for $H$ drawn from GUE.
\begin{theorem}
    Consider unitary ensembles $\mathcal{E}_L$ and $\mathcal{E}_R$ defined in Eq.~\eqref{uleft} and Eq.~\eqref{uright}. Assuming the Hamiltonian $H$ satisfies the $k$-th no-resonance condition and in the limit $T\to\infty$, the GUE-averaged frame potential is
    \begin{equation}
        \mathop{\mathbb{E}}\limits_{H\sim \mathrm{GUE}}\mathcal{F}^{(k)}_{\mathcal{E}_{L/R}}=k!\left( 1+\sum _{m=1}^{k}\binom{k}{m}\frac{\mathcal{F}^{(m)}_{\mathcal{E}'}}{D^{m}} \right)+\text{subleading},
    \end{equation}
    where $L/R$ indicates the formula holds for both $\mathcal{E}_L$  and $\mathcal{E}_R$. The subleading term is at least $O(1/D)$ smaller than the leading term. The proof is given in the SM~\cite{SM}.
    \label{theounitary}
\end{theorem}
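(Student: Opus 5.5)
We follow the same route as for Theorem~\ref{theostate}: first compute the long-time frame potential for a fixed $H$ using the no-resonance condition, then average over the GUE eigenbasis with Weingarten calculus and keep only the leading-order terms in $1/D$.

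\emph{Fixed-$H$ expression.} Take $\mathcal{E}_L$ first. With $U=e^{-iHt}V$ and $U'=e^{-iHt'}V'$ we have $\mathrm{tr}[U^\dagger U']=\mathrm{tr}[V^\dagger e^{iH\tau}V']$, where $\tau=t-t'$. Write $e^{iH\tau}=\sum_n e^{iE_n\tau}\ket{E_n}\bra{E_n}$ and $W:=V'V^\dagger$. Then
\begin{equation}
|\mathrm{tr}[U^\dagger U']|^{2k}=\sum_{\vec n,\vec m}e^{i\tau(\sum_a E_{n_a}-\sum_a E_{m_a})}\prod_{a=1}^k W_{n_an_a}\overline{W_{m_am_a}},
\end{equation}
where $W_{nn}=\bra{E_n}W\ket{E_n}$. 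As $T\to\infty$ the time average kills every oscillating term. By the $k$-th no-resonance condition, the surviving terms are exactly those with $\vec m$ a permutation of $\vec n$. Summing over such $\vec m$, and accounting for the multiset counting as in Eq.~\eqref{statefp}, gives
\begin{equation}
\mathcal{F}^{(k)}_{\mathcal{E}_L}=k!\sum_{\vec n}\frac{1}{\Omega(\vec n)}\,\mathbb{E}_{V,V'\sim\mathcal{E}'}\prod_{a=1}^k|W_{n_an_a}|^2 .
\end{equation}
For $\mathcal{E}_R$ the trace becomes $\mathrm{tr}[e^{iHt}V^\dagger V'e^{-iHt'}]=\mathrm{tr}[V^\dagger V' e^{-iH\tau}]$. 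This has the same form with $W=V^\dagger V'$. Since the Haar average below is invariant under the relevant conjugations, both cases reduce to the same computation.

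\emph{GUE average.} The GUE eigenbasis is Haar-distributed and independent of the spectrum. The spectrum enters only through the no-resonance condition, which holds almost surely. So we need $\mathbb{E}_{U\sim\mathrm{Haar}}\prod_a |(U^\dagger W U)_{n_an_a}|^2$. The dominant part of $\sum_{\vec n}$ is the set of distinct indices, where $\Omega=1$; there are $D^k(1+O(1/D))$ such tuples. Repeated indices cost a factor $1/D$ and go into the subleading term. For distinct indices, $\prod_a|\bra{u_a}W\ket{u_a}|^2$ is a $2k$-th moment of $k$ orthonormal Haar vectors. To leading order in $1/D$ these vectors can be treated as independent Haar vectors, because the Weingarten corrections are suppressed by $1/D$. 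Each factor then has the Haar moment
\begin{equation}
\mathbb{E}_u|\bra{u}W\ket{u}|^2=\frac{|\mathrm{tr}W|^2+\mathrm{tr}(WW^\dagger)}{D(D+1)}=\frac{1+|\mathrm{tr}W|^2/D}{D}+O(D^{-2}),
\end{equation}
using $\mathrm{tr}(WW^\dagger)=D$. Multiplying over $k$ factors, the $D^{-k}$ cancels the $D^k$ count of tuples. Expanding $(1+x)^k$ with $x=|\mathrm{tr}W|^2/D$ gives $\sum_m\binom{k}{m}|\mathrm{tr}W|^{2m}/D^m$. Averaging over $V,V'$ with $|\mathrm{tr}W|=|\mathrm{tr}V^\dagger V'|$ turns this into $\mathcal{F}^{(m)}_{\mathcal{E}'}/D^m$. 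This yields the claimed formula.

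\emph{Main obstacle.} The hard step is controlling the subleading terms uniformly. The Weingarten corrections for $k$ orthonormal vectors contain cross-contractions such as $\mathrm{tr}(W^2)$ and $\mathrm{tr}(W)^2\,\overline{\mathrm{tr}(W^2)}$. These must be shown to be at least $1/D$ smaller than the retained terms $|\mathrm{tr}W|^{2m}/D^m$, even when $|\mathrm{tr}W|$ is as large as $D$. I would bound each permutation pair in the Weingarten expansion by a product of cycle traces. Using $|\mathrm{tr}(W^j)|\le D$ together with $\mathrm{Wg}(\sigma)=O(D^{-2k-|\sigma|})$, power counting should show that each non-identity contribution carries an extra $1/D$ relative to the leading term with the same powers of $|\mathrm{tr}W|$. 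The same argument handles the repeated-index terms with $\Omega>1$. This mirrors the corresponding step in the proof of Theorem~\ref{theostate}.
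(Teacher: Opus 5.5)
Your skeleton matches the paper's. The fixed-$H$ formula with the $k!/\Omega(\vec n)$ weight, the reduction of the GUE average to a Haar average over the eigenbasis, the $L\leftrightarrow R$ equivalence (conjugating $V^\dagger V'$ into $V'V^\dagger$), and the leading term are all correct. Treating the $k$ eigenvectors as independent Haar vectors is a clean way to reproduce exactly the paper's $\pi=\sigma\in E_{2k}$ contributions, $(1+|\mathrm{tr}W|^2/D)^k$. One minor point: the error in your one-vector moment should be a relative factor $1+O(1/D)$, not an additive $O(D^{-2})$, because $|\mathrm{tr}W|^2/D$ can be of order $D$.

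The gap is that the only nontrivial part of the theorem, the uniform $1/D$ suppression of everything else, is left at ``power counting should show''. The criterion you propose, that each correction is $1/D$ smaller than the retained term with the same power of $|\mathrm{tr}W|$, does not work as stated. Cross-contractions can produce odd powers with no retained counterpart. For $k=3$ and $\sigma=(1\,2\,3)$ one gets $T_\sigma(W)=\mathrm{tr}(W^3)\,\overline{\mathrm{tr}W}^{\,3}$. Here $\mathrm{tr}(W^3)$ can only be bounded by $D$, and $|\mathrm{tr}W|^3$ matches no term $|\mathrm{tr}W|^{2m}$.

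The following step closes the gap, and it is exactly the paper's argument.

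\emph{Expansion and cycle bound.} Write the average as $k!\sum_{\pi,\sigma\in S_{2k}}\mathrm{Wg}(\pi^{-1}\sigma,D)\,\mathbb{E}_{\mathcal{E}'}[T_\sigma(W)]\,S_\pi$. Because $W$ and $W^\dagger$ commute, a cycle $c$ of $\sigma$ contributes $\mathrm{tr}(W^{\#_L(c)-\#_R(c)})$. This equals $D$ if the cycle is balanced, has modulus $|\mathrm{tr}W|$ if the imbalance is $\pm1$, and is at most $D$ otherwise. Hence $|T_\sigma(W)|\le D^{l-M}|\mathrm{tr}W|^{M}$, where $l$ is the number of cycles and $M$ the number of imbalance-$\pm1$ cycles.

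\emph{Prefactor bound.} If $\sigma\notin E_{2k}$, then either $\pi\neq\sigma$, so the Weingarten function gains $1/D$, or $\pi=\sigma\notin E_{2k}$, so $S_\pi=O(D^{k-1})$. Either way $|\mathrm{Wg}\,S_\pi|=O(D^{-k-1})$, and the term is $O(\mathbb{E}_{\mathcal{E}'}|\mathrm{tr}W|^{M}/D^{k+1-l+M})$.

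\emph{Counting inequality.} The key combinatorial input you are missing is $2k\ge M+2(l-M)$. It holds because fixed points have imbalance $1$, so every cycle not counted in $M$ has length at least $2$.

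\emph{Even and odd $M$.} For even $M$, the inequality makes the term $1/D$ smaller than $\mathcal{F}^{(M/2)}_{\mathcal{E}'}/D^{M/2}$. For odd $M$, trade one factor using $|\mathrm{tr}W|\le D$. Then use that $2k-2l+M$ is odd, hence at least $1$, to get $1/D$ suppression relative to $\mathcal{F}^{((M-1)/2)}_{\mathcal{E}'}/D^{(M-1)/2}$.

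\emph{Remaining terms.} The $\sigma\in E_{2k}$, $\pi\neq\sigma$ terms and your repeated-index terms satisfy the same $O(D^{-k-1})$ prefactor bound, so the same estimate covers them. With this step supplied, your outline becomes the paper's proof.
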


Since for unitary ensembles, $\mathcal{F}^{(m)}_{\mathcal{E}'}/D^m$ is not necessarily  non-increasing with respect to the order $m$, the first-order frame potential of $\mathcal{E}'$ alone does not determine whether $\mathcal{E}_{L/R}$ forms unitary $k$-designs. Higher order information of $\mathcal{E}'$ is therefore required. 

We can use Theorem~\ref{theounitary} to explain recent proposals for generating unitary $k$-designs from Hamiltonian evolution. Ref.~\cite{Zhou2026} shows that three GUE Hamiltonians are sufficient to generate unitary $k$-designs, whereas two Hamiltonians are not. More specifically, the authors constructed unitary ensembles $\mathcal{E}_{2\mathrm{SP}}=\{e^{-iH_2t_2}e^{-iH_1t_1}\mid t_1,t_2\sim\mathrm{Unif}[0,T]\}$ and $\mathcal{E}_{3\mathrm{SP}}=\{e^{-iH_3t_3}e^{-iH_2t_2}e^{-iH_1t_1}\mid t_1,t_2,t_3\sim\mathrm{Unif}[0,T]\}$. We can now immediately see this by using Theorem~\ref{theounitary} recursively. First taking the ensemble  $\mathcal{E}'$ to consist only of the identity operator, we obtain $\mathcal{E}_{1\mathrm{SP}}=\{e^{-iH_1t_1}\mid t_1\sim \mathrm{Unif}[0,T]\}$. Using Theorem~\ref{theounitary} we get
\begin{equation}
    \mathop{\mathbb{E}}\limits_{H_1\sim \mathrm{GUE}}\mathcal{F}^{(k)}_{\mathcal{E}_{1\mathrm{SP}}}=k!D^k+\text{subleading}.
\end{equation}
Applying a second quench to $\mathcal{E}_{1\mathrm{SP}}$ will give us $\mathcal{E}_{2\mathrm{SP}}$. Using Theorem~\ref{theounitary} again, we get
\begin{equation}
    \mathop{\mathbb{E}}\limits_{H_1,H_2\sim \mathrm{GUE}}\mathcal{F}^{(k)}_{\mathcal{E}_{2\mathrm{SP}}}=k!\left( 1+\sum _{m=1}^{k}\binom{k}{m}m!  \right)+\text{subleading},
\end{equation}
which coincides with the results in Ref.~\cite{Zhou2026}. Note that the frame potential of $\mathcal{E}_{2\mathrm{SP}}$ does not equal the Haar value $k!$, but it is already of order $O(1)$. Therefore, by using Theorem~\ref{theounitary} a third time, we know that $\mathcal{E}_{3\mathrm{SP}}$ forms a unitary $k$-design. 

Another construction of unitary $k$-designs was proposed in Ref.~\cite{Sun2026}, where the authors construct the ensemble $\{e^{-iH_2t_2}Pe^{-iH_1t_1}\mid t_1,t_2\sim \mathrm{Unif}[0,T],P\sim \mathcal{P}_N\}$ and show that it forms a unitary $k$-design for GUE Hamiltonians. This result also follows directly from Theorem~\ref{theounitary}. Moreover, Theorem~\ref{theounitary} suggests that the particular location of Pauli string insertion does not matter.

\textit{Discussions.--} We have shown that temporal randomness generated by a fixed Hamiltonian can be promoted to finite-moment Haar randomness by suitably choosing the initial-state ensemble. For GUE Hamiltonians, we derive a recursion relation connecting the frame potentials before and after the evolution, which shows that any initial state $1$-design is sufficient to generate state $k$-designs in the thermodynamic limit. We further demonstrate that the same mechanism works for a local nonintegrable mixed-field Ising Hamiltonian for initial product states in an appropriately chosen Pauli basis inspired by eigenstate thermalization. We show that the leading finite-time correction scales as $T^{-1}$ and can be suppressed to $T^{-M}$ using an $M$-step alternating-quench protocol, which allows for convergence to $k$-designs in polynomial time. Finally, we extend our framework to unitary designs, revealing the common mechanism underlying recent constructions of unitary $k$-designs from Hamiltonian evolution. An important open question is to determine general criteria under which a local Hamiltonian can generate state designs, as well as a prescription for constructing the appropriate initial ensemble. It would also be interesting to establish whether fixed local Hamiltonians can generate unitary $k$-designs.

\textit{Acknowledgments.-} This work is supported by Grant No. 12375027 from the National Natural Science Foundation of China. Numerical simulations were performed on the High-performance Computing Platform of Peking University.

\bibliographystyle{apsrev4-2}
\bibliography{ref}

\begin{thebibliography}{40}%
\makeatletter
\providecommand \@ifxundefined [1]{%
 \@ifx{#1\undefined}
}%
\providecommand \@ifnum [1]{%
 \ifnum #1\expandafter \@firstoftwo
 \else \expandafter \@secondoftwo
 \fi
}%
\providecommand \@ifx [1]{%
 \ifx #1\expandafter \@firstoftwo
 \else \expandafter \@secondoftwo
 \fi
}%
\providecommand \natexlab [1]{#1}%
\providecommand \enquote  [1]{``#1''}%
\providecommand \bibnamefont  [1]{#1}%
\providecommand \bibfnamefont [1]{#1}%
\providecommand \citenamefont [1]{#1}%
\providecommand \href@noop [0]{\@secondoftwo}%
\providecommand \href [0]{\begingroup \@sanitize@url \@href}%
\providecommand \@href[1]{\@@startlink{#1}\@@href}%
\providecommand \@@href[1]{\endgroup#1\@@endlink}%
\providecommand \@sanitize@url [0]{\catcode `\\12\catcode `\$12\catcode `\&12\catcode `\#12\catcode `\^12\catcode `\_12\catcode `\%12\relax}%
\providecommand \@@startlink[1]{}%
\providecommand \@@endlink[0]{}%
\providecommand \url  [0]{\begingroup\@sanitize@url \@url }%
\providecommand \@url [1]{\endgroup\@href {#1}{\urlprefix }}%
\providecommand \urlprefix  [0]{URL }%
\providecommand \Eprint [0]{\href }%
\providecommand \doibase [0]{https://doi.org/}%
\providecommand \selectlanguage [0]{\@gobble}%
\providecommand \bibinfo  [0]{\@secondoftwo}%
\providecommand \bibfield  [0]{\@secondoftwo}%
\providecommand \translation [1]{[#1]}%
\providecommand \BibitemOpen [0]{}%
\providecommand \bibitemStop [0]{}%
\providecommand \bibitemNoStop [0]{.\EOS\space}%
\providecommand \EOS [0]{\spacefactor3000\relax}%
\providecommand \BibitemShut  [1]{\csname bibitem#1\endcsname}%
\let\auto@bib@innerbib\@empty
\bibitem [{\citenamefont {Arnold}\ and\ \citenamefont {Avez}(1968)}]{Arnold1968}%
  \BibitemOpen
  \bibfield  {author} {\bibinfo {author} {\bibfnamefont {V.~I.}\ \bibnamefont {Arnold}}\ and\ \bibinfo {author} {\bibfnamefont {A.}~\bibnamefont {Avez}},\ }\href@noop {} {\emph {\bibinfo {title} {Ergodic Problems of Classical Mechanics}}}\ (\bibinfo  {publisher} {W.~A.~Benjamin},\ \bibinfo {address} {New York},\ \bibinfo {year} {1968})\BibitemShut {NoStop}%
\bibitem [{\citenamefont {Cornfeld}\ \emph {et~al.}(1982)\citenamefont {Cornfeld}, \citenamefont {Fomin},\ and\ \citenamefont {Sinai}}]{Cornfeld1982}%
  \BibitemOpen
  \bibfield  {author} {\bibinfo {author} {\bibfnamefont {I.~P.}\ \bibnamefont {Cornfeld}}, \bibinfo {author} {\bibfnamefont {S.~V.}\ \bibnamefont {Fomin}},\ and\ \bibinfo {author} {\bibfnamefont {Y.~G.}\ \bibnamefont {Sinai}},\ }\href@noop {} {\emph {\bibinfo {title} {Ergodic Theory}}},\ \bibinfo {series} {Grundlehren der mathematischen Wissenschaften}, Vol.\ \bibinfo {volume} {245}\ (\bibinfo  {publisher} {Springer},\ \bibinfo {address} {New York},\ \bibinfo {year} {1982})\BibitemShut {NoStop}%
\bibitem [{\citenamefont {Haar}(1933)}]{Haar1933}%
  \BibitemOpen
  \bibfield  {author} {\bibinfo {author} {\bibfnamefont {A.}~\bibnamefont {Haar}},\ }\href@noop {} {\bibfield  {journal} {\bibinfo  {journal} {Annals of Mathematics}\ }\textbf {\bibinfo {volume} {34}},\ \bibinfo {pages} {147} (\bibinfo {year} {1933})}\BibitemShut {NoStop}%
\bibitem [{\citenamefont {Renes}\ \emph {et~al.}(2004)\citenamefont {Renes}, \citenamefont {Blume-Kohout}, \citenamefont {Scott},\ and\ \citenamefont {Caves}}]{Renes_2004}%
  \BibitemOpen
  \bibfield  {author} {\bibinfo {author} {\bibfnamefont {J.~M.}\ \bibnamefont {Renes}}, \bibinfo {author} {\bibfnamefont {R.}~\bibnamefont {Blume-Kohout}}, \bibinfo {author} {\bibfnamefont {A.~J.}\ \bibnamefont {Scott}},\ and\ \bibinfo {author} {\bibfnamefont {C.~M.}\ \bibnamefont {Caves}},\ }\href {https://doi.org/10.1063/1.1737053} {\bibfield  {journal} {\bibinfo  {journal} {Journal of Mathematical Physics}\ }\textbf {\bibinfo {volume} {45}},\ \bibinfo {pages} {2171–2180} (\bibinfo {year} {2004})}\BibitemShut {NoStop}%
\bibitem [{\citenamefont {Scott}(2008)}]{Scott_2008}%
  \BibitemOpen
  \bibfield  {author} {\bibinfo {author} {\bibfnamefont {A.~J.}\ \bibnamefont {Scott}},\ }\href {https://doi.org/10.1088/1751-8113/41/5/055308} {\bibfield  {journal} {\bibinfo  {journal} {Journal of Physics A: Mathematical and Theoretical}\ }\textbf {\bibinfo {volume} {41}},\ \bibinfo {pages} {055308} (\bibinfo {year} {2008})}\BibitemShut {NoStop}%
\bibitem [{\citenamefont {Ambainis}\ and\ \citenamefont {Emerson}(2007)}]{Ambainis2007}%
  \BibitemOpen
  \bibfield  {author} {\bibinfo {author} {\bibfnamefont {A.}~\bibnamefont {Ambainis}}\ and\ \bibinfo {author} {\bibfnamefont {J.}~\bibnamefont {Emerson}},\ }\href@noop {} {\bibinfo {title} {Quantum t-designs: t-wise independence in the quantum world}} (\bibinfo {year} {2007}),\ \Eprint {https://arxiv.org/abs/quant-ph/0701126} {arXiv:quant-ph/0701126} \BibitemShut {NoStop}%
\bibitem [{\citenamefont {Knill}\ \emph {et~al.}(2008)\citenamefont {Knill}, \citenamefont {Leibfried}, \citenamefont {Reichle}, \citenamefont {Britton}, \citenamefont {Blakestad}, \citenamefont {Jost}, \citenamefont {Langer}, \citenamefont {Ozeri}, \citenamefont {Seidelin},\ and\ \citenamefont {Wineland}}]{Knill_2008}%
  \BibitemOpen
  \bibfield  {author} {\bibinfo {author} {\bibfnamefont {E.}~\bibnamefont {Knill}}, \bibinfo {author} {\bibfnamefont {D.}~\bibnamefont {Leibfried}}, \bibinfo {author} {\bibfnamefont {R.}~\bibnamefont {Reichle}}, \bibinfo {author} {\bibfnamefont {J.}~\bibnamefont {Britton}}, \bibinfo {author} {\bibfnamefont {R.~B.}\ \bibnamefont {Blakestad}}, \bibinfo {author} {\bibfnamefont {J.~D.}\ \bibnamefont {Jost}}, \bibinfo {author} {\bibfnamefont {C.}~\bibnamefont {Langer}}, \bibinfo {author} {\bibfnamefont {R.}~\bibnamefont {Ozeri}}, \bibinfo {author} {\bibfnamefont {S.}~\bibnamefont {Seidelin}},\ and\ \bibinfo {author} {\bibfnamefont {D.~J.}\ \bibnamefont {Wineland}},\ }\bibfield  {journal} {\bibinfo  {journal} {Physical Review A}\ }\textbf {\bibinfo {volume} {77}},\ \href {https://doi.org/10.1103/physreva.77.012307} {10.1103/physreva.77.012307} (\bibinfo {year} {2008})\BibitemShut {NoStop}%
\bibitem [{\citenamefont {Huang}\ \emph {et~al.}(2020)\citenamefont {Huang}, \citenamefont {Kueng},\ and\ \citenamefont {Preskill}}]{Huang_2020}%
  \BibitemOpen
  \bibfield  {author} {\bibinfo {author} {\bibfnamefont {H.-Y.}\ \bibnamefont {Huang}}, \bibinfo {author} {\bibfnamefont {R.}~\bibnamefont {Kueng}},\ and\ \bibinfo {author} {\bibfnamefont {J.}~\bibnamefont {Preskill}},\ }\href {https://doi.org/10.1038/s41567-020-0932-7} {\bibfield  {journal} {\bibinfo  {journal} {Nature Physics}\ }\textbf {\bibinfo {volume} {16}},\ \bibinfo {pages} {1050–1057} (\bibinfo {year} {2020})}\BibitemShut {NoStop}%
\bibitem [{\citenamefont {Ho}\ and\ \citenamefont {Choi}(2022)}]{Ho_2022}%
  \BibitemOpen
  \bibfield  {author} {\bibinfo {author} {\bibfnamefont {W.~W.}\ \bibnamefont {Ho}}\ and\ \bibinfo {author} {\bibfnamefont {S.}~\bibnamefont {Choi}},\ }\bibfield  {journal} {\bibinfo  {journal} {Physical Review Letters}\ }\textbf {\bibinfo {volume} {128}},\ \href {https://doi.org/10.1103/physrevlett.128.060601} {10.1103/physrevlett.128.060601} (\bibinfo {year} {2022})\BibitemShut {NoStop}%
\bibitem [{\citenamefont {Ippoliti}\ and\ \citenamefont {Ho}(2023)}]{Ippoliti_2023}%
  \BibitemOpen
  \bibfield  {author} {\bibinfo {author} {\bibfnamefont {M.}~\bibnamefont {Ippoliti}}\ and\ \bibinfo {author} {\bibfnamefont {W.~W.}\ \bibnamefont {Ho}},\ }\bibfield  {journal} {\bibinfo  {journal} {PRX Quantum}\ }\textbf {\bibinfo {volume} {4}},\ \href {https://doi.org/10.1103/prxquantum.4.030322} {10.1103/prxquantum.4.030322} (\bibinfo {year} {2023})\BibitemShut {NoStop}%
\bibitem [{\citenamefont {Cotler}\ \emph {et~al.}(2023)\citenamefont {Cotler}, \citenamefont {Mark}, \citenamefont {Huang}, \citenamefont {Hern\'andez}, \citenamefont {Choi}, \citenamefont {Shaw}, \citenamefont {Endres},\ and\ \citenamefont {Choi}}]{PRXQuantum.4.010311}%
  \BibitemOpen
  \bibfield  {author} {\bibinfo {author} {\bibfnamefont {J.~S.}\ \bibnamefont {Cotler}}, \bibinfo {author} {\bibfnamefont {D.~K.}\ \bibnamefont {Mark}}, \bibinfo {author} {\bibfnamefont {H.-Y.}\ \bibnamefont {Huang}}, \bibinfo {author} {\bibfnamefont {F.}~\bibnamefont {Hern\'andez}}, \bibinfo {author} {\bibfnamefont {J.}~\bibnamefont {Choi}}, \bibinfo {author} {\bibfnamefont {A.~L.}\ \bibnamefont {Shaw}}, \bibinfo {author} {\bibfnamefont {M.}~\bibnamefont {Endres}},\ and\ \bibinfo {author} {\bibfnamefont {S.}~\bibnamefont {Choi}},\ }\href {https://doi.org/10.1103/PRXQuantum.4.010311} {\bibfield  {journal} {\bibinfo  {journal} {PRX Quantum}\ }\textbf {\bibinfo {volume} {4}},\ \bibinfo {pages} {010311} (\bibinfo {year} {2023})}\BibitemShut {NoStop}%
\bibitem [{\citenamefont {Choi}\ \emph {et~al.}(2023)\citenamefont {Choi}, \citenamefont {Shaw}, \citenamefont {Madjarov}, \citenamefont {Xie}, \citenamefont {Finkelstein}, \citenamefont {Covey}, \citenamefont {Cotler}, \citenamefont {Mark}, \citenamefont {Huang}, \citenamefont {Kale}, \citenamefont {Pichler}, \citenamefont {Brandão}, \citenamefont {Choi},\ and\ \citenamefont {Endres}}]{Choi_2023}%
  \BibitemOpen
  \bibfield  {author} {\bibinfo {author} {\bibfnamefont {J.}~\bibnamefont {Choi}}, \bibinfo {author} {\bibfnamefont {A.~L.}\ \bibnamefont {Shaw}}, \bibinfo {author} {\bibfnamefont {I.~S.}\ \bibnamefont {Madjarov}}, \bibinfo {author} {\bibfnamefont {X.}~\bibnamefont {Xie}}, \bibinfo {author} {\bibfnamefont {R.}~\bibnamefont {Finkelstein}}, \bibinfo {author} {\bibfnamefont {J.~P.}\ \bibnamefont {Covey}}, \bibinfo {author} {\bibfnamefont {J.~S.}\ \bibnamefont {Cotler}}, \bibinfo {author} {\bibfnamefont {D.~K.}\ \bibnamefont {Mark}}, \bibinfo {author} {\bibfnamefont {H.-Y.}\ \bibnamefont {Huang}}, \bibinfo {author} {\bibfnamefont {A.}~\bibnamefont {Kale}}, \bibinfo {author} {\bibfnamefont {H.}~\bibnamefont {Pichler}}, \bibinfo {author} {\bibfnamefont {F.~G. S.~L.}\ \bibnamefont {Brandão}}, \bibinfo {author} {\bibfnamefont {S.}~\bibnamefont {Choi}},\ and\ \bibinfo {author} {\bibfnamefont {M.}~\bibnamefont {Endres}},\ }\href {https://doi.org/10.1038/s41586-022-05442-1} {\bibfield  {journal} {\bibinfo  {journal} {Nature}\ }\textbf {\bibinfo {volume} {613}},\ \bibinfo {pages} {468–473} (\bibinfo {year} {2023})}\BibitemShut {NoStop}%
\bibitem [{\citenamefont {Mok}\ \emph {et~al.}(2025)\citenamefont {Mok}, \citenamefont {Haug}, \citenamefont {Shaw}, \citenamefont {Endres},\ and\ \citenamefont {Preskill}}]{Mok_2025}%
  \BibitemOpen
  \bibfield  {author} {\bibinfo {author} {\bibfnamefont {W.-K.}\ \bibnamefont {Mok}}, \bibinfo {author} {\bibfnamefont {T.}~\bibnamefont {Haug}}, \bibinfo {author} {\bibfnamefont {A.~L.}\ \bibnamefont {Shaw}}, \bibinfo {author} {\bibfnamefont {M.}~\bibnamefont {Endres}},\ and\ \bibinfo {author} {\bibfnamefont {J.}~\bibnamefont {Preskill}},\ }\bibfield  {journal} {\bibinfo  {journal} {Physical Review Letters}\ }\textbf {\bibinfo {volume} {134}},\ \href {https://doi.org/10.1103/physrevlett.134.180403} {10.1103/physrevlett.134.180403} (\bibinfo {year} {2025})\BibitemShut {NoStop}%
\bibitem [{\citenamefont {Mark}\ \emph {et~al.}(2024)\citenamefont {Mark}, \citenamefont {Surace}, \citenamefont {Elben}, \citenamefont {Shaw}, \citenamefont {Choi}, \citenamefont {Refael}, \citenamefont {Endres},\ and\ \citenamefont {Choi}}]{PhysRevX.14.041051}%
  \BibitemOpen
  \bibfield  {author} {\bibinfo {author} {\bibfnamefont {D.~K.}\ \bibnamefont {Mark}}, \bibinfo {author} {\bibfnamefont {F.}~\bibnamefont {Surace}}, \bibinfo {author} {\bibfnamefont {A.}~\bibnamefont {Elben}}, \bibinfo {author} {\bibfnamefont {A.~L.}\ \bibnamefont {Shaw}}, \bibinfo {author} {\bibfnamefont {J.}~\bibnamefont {Choi}}, \bibinfo {author} {\bibfnamefont {G.}~\bibnamefont {Refael}}, \bibinfo {author} {\bibfnamefont {M.}~\bibnamefont {Endres}},\ and\ \bibinfo {author} {\bibfnamefont {S.}~\bibnamefont {Choi}},\ }\href {https://doi.org/10.1103/PhysRevX.14.041051} {\bibfield  {journal} {\bibinfo  {journal} {Phys. Rev. X}\ }\textbf {\bibinfo {volume} {14}},\ \bibinfo {pages} {041051} (\bibinfo {year} {2024})}\BibitemShut {NoStop}%
\bibitem [{\citenamefont {Ghosh}\ \emph {et~al.}(2025{\natexlab{a}})\citenamefont {Ghosh}, \citenamefont {Langlett}, \citenamefont {Hunter-Jones},\ and\ \citenamefont {Rodriguez-Nieva}}]{Ghosh_2025}%
  \BibitemOpen
  \bibfield  {author} {\bibinfo {author} {\bibfnamefont {S.}~\bibnamefont {Ghosh}}, \bibinfo {author} {\bibfnamefont {C.~M.}\ \bibnamefont {Langlett}}, \bibinfo {author} {\bibfnamefont {N.}~\bibnamefont {Hunter-Jones}},\ and\ \bibinfo {author} {\bibfnamefont {J.~F.}\ \bibnamefont {Rodriguez-Nieva}},\ }\bibfield  {journal} {\bibinfo  {journal} {Physical Review B}\ }\textbf {\bibinfo {volume} {112}},\ \href {https://doi.org/10.1103/bwr6-vskn} {10.1103/bwr6-vskn} (\bibinfo {year} {2025}{\natexlab{a}})\BibitemShut {NoStop}%
\bibitem [{\citenamefont {Mok}\ \emph {et~al.}(2026)\citenamefont {Mok}, \citenamefont {Haug}, \citenamefont {Ho},\ and\ \citenamefont {Preskill}}]{mok2026naturestingyuniversalityscrooge}%
  \BibitemOpen
  \bibfield  {author} {\bibinfo {author} {\bibfnamefont {W.-K.}\ \bibnamefont {Mok}}, \bibinfo {author} {\bibfnamefont {T.}~\bibnamefont {Haug}}, \bibinfo {author} {\bibfnamefont {W.~W.}\ \bibnamefont {Ho}},\ and\ \bibinfo {author} {\bibfnamefont {J.}~\bibnamefont {Preskill}},\ }\href {https://arxiv.org/abs/2601.00266} {\bibinfo {title} {Nature is stingy: Universality of scrooge ensembles in quantum many-body systems}} (\bibinfo {year} {2026}),\ \Eprint {https://arxiv.org/abs/2601.00266} {arXiv:2601.00266 [quant-ph]} \BibitemShut {NoStop}%
\bibitem [{\citenamefont {Ghosh}\ \emph {et~al.}(2025{\natexlab{b}})\citenamefont {Ghosh}, \citenamefont {Hunter-Jones},\ and\ \citenamefont {Rodriguez-Nieva}}]{ghosh2025randomizationtimesquantumchaotic}%
  \BibitemOpen
  \bibfield  {author} {\bibinfo {author} {\bibfnamefont {S.}~\bibnamefont {Ghosh}}, \bibinfo {author} {\bibfnamefont {N.}~\bibnamefont {Hunter-Jones}},\ and\ \bibinfo {author} {\bibfnamefont {J.~F.}\ \bibnamefont {Rodriguez-Nieva}},\ }\href {https://arxiv.org/abs/2512.25074} {\bibinfo {title} {Randomization times under quantum chaotic hamiltonian evolution}} (\bibinfo {year} {2025}{\natexlab{b}}),\ \Eprint {https://arxiv.org/abs/2512.25074} {arXiv:2512.25074 [cond-mat.stat-mech]} \BibitemShut {NoStop}%
\bibitem [{\citenamefont {Zhou}\ \emph {et~al.}(2026{\natexlab{a}})\citenamefont {Zhou}, \citenamefont {Löwenberg},\ and\ \citenamefont {Sonner}}]{Zhou_2026}%
  \BibitemOpen
  \bibfield  {author} {\bibinfo {author} {\bibfnamefont {Y.-N.}\ \bibnamefont {Zhou}}, \bibinfo {author} {\bibfnamefont {R.}~\bibnamefont {Löwenberg}},\ and\ \bibinfo {author} {\bibfnamefont {J.}~\bibnamefont {Sonner}},\ }\bibfield  {journal} {\bibinfo  {journal} {Physical Review Letters}\ }\textbf {\bibinfo {volume} {136}},\ \href {https://doi.org/10.1103/rvwb-r9lv} {10.1103/rvwb-r9lv} (\bibinfo {year} {2026}{\natexlab{a}})\BibitemShut {NoStop}%
\bibitem [{\citenamefont {Zhou}\ \emph {et~al.}(2026{\natexlab{b}})\citenamefont {Zhou}, \citenamefont {Zhou},\ and\ \citenamefont {Sonner}}]{Zhou2026}%
  \BibitemOpen
  \bibfield  {author} {\bibinfo {author} {\bibfnamefont {Y.-N.}\ \bibnamefont {Zhou}}, \bibinfo {author} {\bibfnamefont {T.-G.}\ \bibnamefont {Zhou}},\ and\ \bibinfo {author} {\bibfnamefont {J.}~\bibnamefont {Sonner}},\ }\href@noop {} {\bibinfo {title} {Three {H}amiltonians are sufficient for unitary $k$-design in temporal ensemble}} (\bibinfo {year} {2026}{\natexlab{b}}),\ \Eprint {https://arxiv.org/abs/2604.04205} {arXiv:2604.04205 [quant-ph]} \BibitemShut {NoStop}%
\bibitem [{\citenamefont {Sun}\ and\ \citenamefont {Zhang}(2026)}]{Sun2026}%
  \BibitemOpen
  \bibfield  {author} {\bibinfo {author} {\bibfnamefont {N.}~\bibnamefont {Sun}}\ and\ \bibinfo {author} {\bibfnamefont {P.}~\bibnamefont {Zhang}},\ }\href@noop {} {\bibinfo {title} {Unitary designs from two chaotic {H}amiltonians and a random {P}auli operation}} (\bibinfo {year} {2026}),\ \Eprint {https://arxiv.org/abs/2604.10122} {arXiv:2604.10122 [quant-ph]} \BibitemShut {NoStop}%
\bibitem [{\citenamefont {Cui}\ \emph {et~al.}(2025)\citenamefont {Cui}, \citenamefont {Schuster}, \citenamefont {Mao}, \citenamefont {Huang},\ and\ \citenamefont {Brand\~{a}o}}]{Cui2025}%
  \BibitemOpen
  \bibfield  {author} {\bibinfo {author} {\bibfnamefont {L.}~\bibnamefont {Cui}}, \bibinfo {author} {\bibfnamefont {T.}~\bibnamefont {Schuster}}, \bibinfo {author} {\bibfnamefont {L.}~\bibnamefont {Mao}}, \bibinfo {author} {\bibfnamefont {H.-Y.}\ \bibnamefont {Huang}},\ and\ \bibinfo {author} {\bibfnamefont {F.}~\bibnamefont {Brand\~{a}o}},\ }\href@noop {} {\bibinfo {title} {Random unitaries from hamiltonian dynamics}} (\bibinfo {year} {2025}),\ \Eprint {https://arxiv.org/abs/2510.08434} {arXiv:2510.08434 [quant-ph]} \BibitemShut {NoStop}%
\bibitem [{\citenamefont {Riddell}\ \emph {et~al.}(2023)\citenamefont {Riddell}, \citenamefont {Pagliaroli},\ and\ \citenamefont {Álvaro M.~Alhambra}}]{10.21468/SciPostPhys.15.4.165}%
  \BibitemOpen
  \bibfield  {author} {\bibinfo {author} {\bibfnamefont {J.}~\bibnamefont {Riddell}}, \bibinfo {author} {\bibfnamefont {N.~J.}\ \bibnamefont {Pagliaroli}},\ and\ \bibinfo {author} {\bibnamefont {Álvaro M.~Alhambra}},\ }\href {https://doi.org/10.21468/SciPostPhys.15.4.165} {\bibfield  {journal} {\bibinfo  {journal} {SciPost Phys.}\ }\textbf {\bibinfo {volume} {15}},\ \bibinfo {pages} {165} (\bibinfo {year} {2023})}\BibitemShut {NoStop}%
\bibitem [{\citenamefont {Huang}\ and\ \citenamefont {Harrow}(2020)}]{huang2020instabilitylocalizationtranslationinvariantsystems}%
  \BibitemOpen
  \bibfield  {author} {\bibinfo {author} {\bibfnamefont {Y.}~\bibnamefont {Huang}}\ and\ \bibinfo {author} {\bibfnamefont {A.~W.}\ \bibnamefont {Harrow}},\ }\href {https://arxiv.org/abs/1907.13392} {\bibinfo {title} {Instability of localization in translation-invariant systems}} (\bibinfo {year} {2020}),\ \Eprint {https://arxiv.org/abs/1907.13392} {arXiv:1907.13392 [cond-mat.dis-nn]} \BibitemShut {NoStop}%
\bibitem [{\citenamefont {Nakata}\ \emph {et~al.}(2014)\citenamefont {Nakata}, \citenamefont {Koashi},\ and\ \citenamefont {Murao}}]{Nakata_2014}%
  \BibitemOpen
  \bibfield  {author} {\bibinfo {author} {\bibfnamefont {Y.}~\bibnamefont {Nakata}}, \bibinfo {author} {\bibfnamefont {M.}~\bibnamefont {Koashi}},\ and\ \bibinfo {author} {\bibfnamefont {M.}~\bibnamefont {Murao}},\ }\href {https://doi.org/10.1088/1367-2630/16/5/053043} {\bibfield  {journal} {\bibinfo  {journal} {New Journal of Physics}\ }\textbf {\bibinfo {volume} {16}},\ \bibinfo {pages} {053043} (\bibinfo {year} {2014})}\BibitemShut {NoStop}%
\bibitem [{SM()}]{SM}%
  \BibitemOpen
  \href@noop {} {}\bibinfo {note} {See Supplemental Material for proofs of the theorems quoted in the main text, Porter-Thomas analysis for the product basis initial ensembles, derivation of the finite-$T$ correction, and stabilizer state initial ensemble.}\BibitemShut {Stop}%
\bibitem [{PT()}]{PT}%
  \BibitemOpen
  \href@noop {} {}\bibinfo {note} {In fact, we only require that the moments satisfy $\mathbb{E}[x_{nm}]=1/D$, $\mathbb{E}[x_{nm}^p]= \Theta(D^{-p})$, and that different $x_{nm}$ can be treated as independent variables. See SM for more details.}\BibitemShut {Stop}%
\bibitem [{\citenamefont {Srednicki}(1994)}]{Srednicki1994}%
  \BibitemOpen
  \bibfield  {author} {\bibinfo {author} {\bibfnamefont {M.}~\bibnamefont {Srednicki}},\ }\href {https://doi.org/10.1103/PhysRevE.50.888} {\bibfield  {journal} {\bibinfo  {journal} {Phys. Rev. E}\ }\textbf {\bibinfo {volume} {50}},\ \bibinfo {pages} {888} (\bibinfo {year} {1994})}\BibitemShut {NoStop}%
\bibitem [{\citenamefont {Rigol}\ \emph {et~al.}(2008)\citenamefont {Rigol}, \citenamefont {Dunjko},\ and\ \citenamefont {Olshanii}}]{Rigol2008}%
  \BibitemOpen
  \bibfield  {author} {\bibinfo {author} {\bibfnamefont {M.}~\bibnamefont {Rigol}}, \bibinfo {author} {\bibfnamefont {V.}~\bibnamefont {Dunjko}},\ and\ \bibinfo {author} {\bibfnamefont {M.}~\bibnamefont {Olshanii}},\ }\href {https://doi.org/10.1038/nature06838} {\bibfield  {journal} {\bibinfo  {journal} {Nature}\ }\textbf {\bibinfo {volume} {452}},\ \bibinfo {pages} {854} (\bibinfo {year} {2008})}\BibitemShut {NoStop}%
\bibitem [{\citenamefont {D'Alessio}\ \emph {et~al.}(2016)\citenamefont {D'Alessio}, \citenamefont {Kafri}, \citenamefont {Polkovnikov},\ and\ \citenamefont {Rigol}}]{DAlessio2016}%
  \BibitemOpen
  \bibfield  {author} {\bibinfo {author} {\bibfnamefont {L.}~\bibnamefont {D'Alessio}}, \bibinfo {author} {\bibfnamefont {Y.}~\bibnamefont {Kafri}}, \bibinfo {author} {\bibfnamefont {A.}~\bibnamefont {Polkovnikov}},\ and\ \bibinfo {author} {\bibfnamefont {M.}~\bibnamefont {Rigol}},\ }\href {https://doi.org/10.1080/00018732.2016.1198134} {\bibfield  {journal} {\bibinfo  {journal} {Adv. Phys.}\ }\textbf {\bibinfo {volume} {65}},\ \bibinfo {pages} {239} (\bibinfo {year} {2016})}\BibitemShut {NoStop}%
\bibitem [{\citenamefont {Atas}\ and\ \citenamefont {Bogomolny}(2017)}]{atas2015quantumisingmodeltransverse}%
  \BibitemOpen
  \bibfield  {author} {\bibinfo {author} {\bibfnamefont {Y.~Y.}\ \bibnamefont {Atas}}\ and\ \bibinfo {author} {\bibfnamefont {E.}~\bibnamefont {Bogomolny}},\ }\href {https://doi.org/10.1088/1751-8121/aa81f6} {\bibfield  {journal} {\bibinfo  {journal} {J. Phys. A: Math. Theor.}\ }\textbf {\bibinfo {volume} {50}},\ \bibinfo {pages} {385102} (\bibinfo {year} {2017})}\BibitemShut {NoStop}%
\bibitem [{\citenamefont {Mele}(2024)}]{Mele:2023ojv}%
  \BibitemOpen
  \bibfield  {author} {\bibinfo {author} {\bibfnamefont {A.~A.}\ \bibnamefont {Mele}},\ }\href {https://doi.org/10.22331/q-2024-05-08-1340} {\bibfield  {journal} {\bibinfo  {journal} {Quantum}\ }\textbf {\bibinfo {volume} {8}},\ \bibinfo {pages} {1340} (\bibinfo {year} {2024})},\ \Eprint {https://arxiv.org/abs/2307.08956} {arXiv:2307.08956 [quant-ph]} \BibitemShut {NoStop}%
\bibitem [{\citenamefont {Kim}\ and\ \citenamefont {Huse}(2013)}]{PhysRevLett.111.127205}%
  \BibitemOpen
  \bibfield  {author} {\bibinfo {author} {\bibfnamefont {H.}~\bibnamefont {Kim}}\ and\ \bibinfo {author} {\bibfnamefont {D.~A.}\ \bibnamefont {Huse}},\ }\href {https://doi.org/10.1103/PhysRevLett.111.127205} {\bibfield  {journal} {\bibinfo  {journal} {Phys. Rev. Lett.}\ }\textbf {\bibinfo {volume} {111}},\ \bibinfo {pages} {127205} (\bibinfo {year} {2013})}\BibitemShut {NoStop}%
\bibitem [{\citenamefont {Ba\~nuls}\ \emph {et~al.}(2011)\citenamefont {Ba\~nuls}, \citenamefont {Cirac},\ and\ \citenamefont {Hastings}}]{PhysRevLett.106.050405}%
  \BibitemOpen
  \bibfield  {author} {\bibinfo {author} {\bibfnamefont {M.~C.}\ \bibnamefont {Ba\~nuls}}, \bibinfo {author} {\bibfnamefont {J.~I.}\ \bibnamefont {Cirac}},\ and\ \bibinfo {author} {\bibfnamefont {M.~B.}\ \bibnamefont {Hastings}},\ }\href {https://doi.org/10.1103/PhysRevLett.106.050405} {\bibfield  {journal} {\bibinfo  {journal} {Phys. Rev. Lett.}\ }\textbf {\bibinfo {volume} {106}},\ \bibinfo {pages} {050405} (\bibinfo {year} {2011})}\BibitemShut {NoStop}%
\bibitem [{\citenamefont {Torres-Herrera}\ \emph {et~al.}(2015)\citenamefont {Torres-Herrera}, \citenamefont {Távora},\ and\ \citenamefont {Santos}}]{Torres_Herrera_2015}%
  \BibitemOpen
  \bibfield  {author} {\bibinfo {author} {\bibfnamefont {E.~J.}\ \bibnamefont {Torres-Herrera}}, \bibinfo {author} {\bibfnamefont {M.}~\bibnamefont {Távora}},\ and\ \bibinfo {author} {\bibfnamefont {L.~F.}\ \bibnamefont {Santos}},\ }\href {https://doi.org/10.1007/s13538-015-0366-3} {\bibfield  {journal} {\bibinfo  {journal} {Brazilian Journal of Physics}\ }\textbf {\bibinfo {volume} {46}},\ \bibinfo {pages} {239–247} (\bibinfo {year} {2015})}\BibitemShut {NoStop}%
\bibitem [{\citenamefont {Cotler}\ \emph {et~al.}(2017)\citenamefont {Cotler}, \citenamefont {Hunter-Jones}, \citenamefont {Liu},\ and\ \citenamefont {Yoshida}}]{Cotler_2017}%
  \BibitemOpen
  \bibfield  {author} {\bibinfo {author} {\bibfnamefont {J.}~\bibnamefont {Cotler}}, \bibinfo {author} {\bibfnamefont {N.}~\bibnamefont {Hunter-Jones}}, \bibinfo {author} {\bibfnamefont {J.}~\bibnamefont {Liu}},\ and\ \bibinfo {author} {\bibfnamefont {B.}~\bibnamefont {Yoshida}},\ }\bibfield  {journal} {\bibinfo  {journal} {Journal of High Energy Physics}\ }\textbf {\bibinfo {volume} {2017}},\ \href {https://doi.org/10.1007/jhep11(2017)048} {10.1007/jhep11(2017)048} (\bibinfo {year} {2017})\BibitemShut {NoStop}%
\bibitem [{\citenamefont {Fyodorov}(2010)}]{fyodorov2010introductionrandommatrixtheory}%
  \BibitemOpen
  \bibfield  {author} {\bibinfo {author} {\bibfnamefont {Y.~V.}\ \bibnamefont {Fyodorov}},\ }\href {https://arxiv.org/abs/math-ph/0412017} {\bibinfo {title} {Introduction to the random matrix theory: Gaussian unitary ensemble and beyond}} (\bibinfo {year} {2010}),\ \Eprint {https://arxiv.org/abs/math-ph/0412017} {arXiv:math-ph/0412017 [math-ph]} \BibitemShut {NoStop}%
\bibitem [{\citenamefont {Zhang}\ \emph {et~al.}(2026)\citenamefont {Zhang}, \citenamefont {Vijay}, \citenamefont {Gu},\ and\ \citenamefont {Bao}}]{Zhang_2026}%
  \BibitemOpen
  \bibfield  {author} {\bibinfo {author} {\bibfnamefont {Y.}~\bibnamefont {Zhang}}, \bibinfo {author} {\bibfnamefont {S.}~\bibnamefont {Vijay}}, \bibinfo {author} {\bibfnamefont {Y.}~\bibnamefont {Gu}},\ and\ \bibinfo {author} {\bibfnamefont {Y.}~\bibnamefont {Bao}},\ }\bibfield  {journal} {\bibinfo  {journal} {PRX Quantum}\ }\textbf {\bibinfo {volume} {7}},\ \href {https://doi.org/10.1103/myrb-nyhf} {10.1103/myrb-nyhf} (\bibinfo {year} {2026})\BibitemShut {NoStop}%
\bibitem [{\citenamefont {Gross}\ \emph {et~al.}(2021)\citenamefont {Gross}, \citenamefont {Nezami},\ and\ \citenamefont {Walter}}]{Gross_2021}%
  \BibitemOpen
  \bibfield  {author} {\bibinfo {author} {\bibfnamefont {D.}~\bibnamefont {Gross}}, \bibinfo {author} {\bibfnamefont {S.}~\bibnamefont {Nezami}},\ and\ \bibinfo {author} {\bibfnamefont {M.}~\bibnamefont {Walter}},\ }\href {https://doi.org/10.1007/s00220-021-04118-7} {\bibfield  {journal} {\bibinfo  {journal} {Communications in Mathematical Physics}\ }\textbf {\bibinfo {volume} {385}},\ \bibinfo {pages} {1325–1393} (\bibinfo {year} {2021})}\BibitemShut {NoStop}%
\bibitem [{\citenamefont {Bittel}\ \emph {et~al.}(2025)\citenamefont {Bittel}, \citenamefont {Eisert}, \citenamefont {Leone}, \citenamefont {Mele},\ and\ \citenamefont {Oliviero}}]{bittel2025completetheorycliffordcommutant}%
  \BibitemOpen
  \bibfield  {author} {\bibinfo {author} {\bibfnamefont {L.}~\bibnamefont {Bittel}}, \bibinfo {author} {\bibfnamefont {J.}~\bibnamefont {Eisert}}, \bibinfo {author} {\bibfnamefont {L.}~\bibnamefont {Leone}}, \bibinfo {author} {\bibfnamefont {A.~A.}\ \bibnamefont {Mele}},\ and\ \bibinfo {author} {\bibfnamefont {S.~F.~E.}\ \bibnamefont {Oliviero}},\ }\href {https://arxiv.org/abs/2504.12263} {\bibinfo {title} {A complete theory of the clifford commutant}} (\bibinfo {year} {2025}),\ \Eprint {https://arxiv.org/abs/2504.12263} {arXiv:2504.12263 [quant-ph]} \BibitemShut {NoStop}%
\bibitem [{\citenamefont {O'Donnell}(2021)}]{odonnell2021analysisbooleanfunctions}%
  \BibitemOpen
  \bibfield  {author} {\bibinfo {author} {\bibfnamefont {R.}~\bibnamefont {O'Donnell}},\ }\href {https://arxiv.org/abs/2105.10386} {\bibinfo {title} {Analysis of boolean functions}} (\bibinfo {year} {2021}),\ \Eprint {https://arxiv.org/abs/2105.10386} {arXiv:2105.10386 [cs.DM]} \BibitemShut {NoStop}%
\end{thebibliography}%

\onecolumngrid
\begin{center}
\bfseries End Matter
\end{center}
\vspace{0.6em}

\twocolumngrid

\setcounter{equation}{0}
\renewcommand{\theequation}{A\arabic{equation}}
\renewcommand{\theHequation}{A\arabic{equation}}

\subsection*{Frame potential and $k$-designs}
Here we summarize the relation between the frame potential and the additive error of approximate designs. For a state ensemble $\mathcal{E}$, its $k$-th moment operator is defined as
\begin{equation}
\rho_{\mathcal{E}}^{(k)}:=\mathop{\mathbb{E}}\limits_{|\psi\rangle\sim\mathcal{E}}\left[\left(|\psi\rangle\langle\psi|\right)^{\otimes k}\right].
\end{equation}
$\mathcal{E}$ forms a state $k$-design if and only if $\rho_{\mathcal{E}}^{(k)}=\rho_{\rm Haar}^{(k)}$, where $\rho_{\rm Haar}^{(k)}$ is the $k$-th moment operator of the Haar ensemble. $\mathcal{E}$ forms an approximate state $k$-design with additive error $\varepsilon$ if $\lVert\rho_{\mathcal{E}}^{(k)}-\rho_{\rm Haar}^{(k)}\rVert_1\leq \varepsilon$. The deviation of the state frame potential from its Haar value bounds the additive error as
\begin{equation}
\sqrt{F_{\mathcal{E}}^{(k)}-F_{\rm Haar}^{(k)}}\leq\lVert\rho_{\mathcal{E}}^{(k)}-\rho_{\rm Haar}^{(k)}\rVert_1\leq\sqrt{\frac{F_{\mathcal{E}}^{(k)}-F_{\rm Haar}^{(k)}}{F_{\rm Haar}^{(k)}}}.
\end{equation}
In particular, $\mathcal{E}$ is an exact state $k$-design if and only if $F_{\mathcal{E}}^{(k)}=F_{\rm Haar}^{(k)}$.

Similarly, for a unitary ensemble $\mathcal{E}$, its $k$-th moment channel is defined as
\begin{equation}
\Phi_{\mathcal{E}}^{(k)}(\cdot):=\mathop{\mathbb{E}}\limits_{U\sim\mathcal{E}}\left[U^{\otimes k}(\cdot)U^{\dagger\otimes k}\right].
\end{equation}
$\mathcal{E}$ forms a  unitary $k$-design if and only if $\Phi_{\mathcal{E}}^{(k)}=\Phi_{\rm Haar}^{(k)}$, and it forms an approximate unitary $k$-design with additive error $\varepsilon$ if $\lVert\Phi_{\mathcal{E}}^{(k)}-\Phi_{\rm Haar}^{(k)}\rVert_{\diamond}\leq \varepsilon$. The unitary frame potential bounds the additive error through~\cite{Mele:2023ojv}
\begin{equation}
\frac{\sqrt{\mathcal{F}_{\mathcal{E}}^{(k)}-\mathcal{F}_{\rm Haar}^{(k)}}}{D^{3k/2}}\leq\lVert\Phi_{\mathcal{E}}^{(k)}-\Phi_{\rm Haar}^{(k)}\rVert_{\diamond}\leq D^k\sqrt{\mathcal{F}_{\mathcal{E}}^{(k)}-\mathcal{F}_{\rm Haar}^{(k)}}.
\end{equation}
In particular, $\mathcal{E}$ is a unitary $k$-design if and only if $\mathcal{F}_{\mathcal{E}}^{(k)}=\mathcal{F}_{\rm Haar}^{(k)}$.

\setcounter{equation}{0}
\renewcommand{\theequation}{B\arabic{equation}}
\renewcommand{\theHequation}{B\arabic{equation}}

\subsection*{Numerical methods}

This section provides the numerical details for the results shown in Figs.~\ref{fig:N_scaling} and~\ref{fig:T_scaling}. For Fig.~\ref{fig:N_scaling}, we consider the ensemble $\mathcal{E}=\{e^{-iHt}|y\rangle\mid t\sim \mathrm{Unif}[0,T], |y\rangle\in Y\text{-basis}\}$, where $H$ is the mixed-field Ising Hamiltonian~\eqref{eq:MFIM} with parameters $(J,h_x,h_z)=(1,0.9045,0.809)$~\cite{PhysRevLett.111.127205}. The corresponding frame potential is $F_{\mathcal{E}}^{(k)}=\mathbb{E}_{t,t',y,y'}|\langle y|e^{iHt}e^{-iHt'}|y'\rangle|^{2k}$.

We evaluate this quantity by exact diagonalization. To improve the sampling efficiency, we use a stratified sampling scheme according to whether $y=y'$ and whether the time difference $|t-t'|$ is smaller than a cutoff $t_0$. Explicitly,
\begin{equation}
\begin{aligned}
&F_{\mathcal{E}}^{(k)}=\frac{1}{D}p(t_0)\mathop{\mathbb{E}}\limits_{y,|t-t'|\leq t_0}\left|\langle y|e^{iHt}e^{-iHt'}|y\rangle\right|^{2k}\\
&+\frac{1}{D}\left[1-p(t_0)\right]\mathop{\mathbb{E}}\limits_{y, |t-t'|>t_0}\left|\langle y|e^{iHt}e^{-iHt'}|y\rangle\right|^{2k}\\
&+\left(1-\frac{1}{D}\right)p(t_0)\mathop{\mathbb{E}}\limits_{y\neq y', |t-t'|\leq t_0}\left|\langle y|e^{iHt}e^{-iHt'}|y'\rangle\right|^{2k}\\
&+\left(1-\frac{1}{D}\right)\left[1-p(t_0)\right]\mathop{\mathbb{E}}\limits_{y\neq y', |t-t'|>t_0}\left|\langle y|e^{iHt}e^{-iHt'}|y'\rangle\right|^{2k}.
\end{aligned}
\end{equation}
Here $p(t_0)=(2Tt_0-t_0^2)/T^2$ is the probability that two independently sampled times $t,t'\in[0,T]$ satisfy $|t-t'|\leq t_0$. In Fig.~\ref{fig:N_scaling}, we take $T=10^6$ and $t_0=2$. To estimate the long-time limit frame potential shown in Fig.~\ref{fig:N_scaling}, we keep only the contributions with $|t-t'|>t_0$. For each of the two large-time-difference strata, $y=y'$ and $y\neq y'$, we use $5\times 10^7$ samples.

The stratification is useful because the overlap $|\bra{y}e^{iH(t-t')}\ket{y'}|$ is large when $|t-t'|$ is small. In particular, when $y=y'$, the overlap is close to unity for $|t-t'|\ll 1$. In addition, for certain pairs $y\neq y'$, the transition amplitude $\bra{y}H\ket{y'}$ can also develop peaks at small but nonzero time differences. Although these short-time events occur with small probability, they can give significant contributions after taking the $2k$-th power and are therefore easily underestimated by naive uniform sampling.

For Fig.~\ref{fig:T_scaling}, we set $N=10$. We use the same parameters for $H_1$ as in Fig.~\ref{fig:N_scaling} and choose $H_2$ with parameters $(J,h_x,h_z)=(1,-1.05,0.5)$~\cite{PhysRevLett.106.050405}. We use the same stratified sampling strategy as before. For example, in the two-step case, we first divide the samples into two strata, $y=y'$ and $y\neq y'$. We then further split each stratum according to whether each of $|t_1-t_1'|$ and $|t_2-t_2'|$ is smaller or larger than $t_0$, resulting in eight strata in total. Combining these strata with their corresponding probabilities gives the finite-$T$ frame potential. We again take $t_0=2$.

\clearpage
\newpage
\onecolumngrid
\appendix 

\section*{Supplemental Material}

\setcounter{section}{0}
\renewcommand{\thesection}{\arabic{section}}

\counterwithout{equation}{section}
\setcounter{equation}{0}
\renewcommand{\theequation}{S\arabic{equation}}
\renewcommand{\theHequation}{S\arabic{equation}}

\setcounter{figure}{0}
\renewcommand{\thefigure}{S\arabic{figure}}
\renewcommand{\theHfigure}{S\arabic{figure}}

\setcounter{theorem}{0}
\renewcommand{\thetheorem}{S\arabic{theorem}}
\renewcommand{\theHtheorem}{S\arabic{theorem}}

The Supplemental Material is organized as follows. In Appendix~\ref{ap:Theorem1}, we derive Eq.~(2) in the main text and prove Theorem~1. In Appendix~\ref{ap:PT}, we use Porter-Thomas statistics to show the evolved ensemble $\mathcal{E}$ can form a state $k$-design when the initial ensemble is a product-state basis. In Appendix~\ref{ap:Theorem2}, we first review the no-go theorem derived in Ref.~\cite{Cui2025}, and then prove our no-go theorem, Theorem~2 in the main text. In Appendix~\ref{ap:ldos}, we show the energy population of several Pauli-basis states for the mixed-field Ising Hamiltonian, providing a physical explanation for choosing the $Y$-basis as the initial ensemble. In Appendix~\ref{ap:finiteT}, we derive the finite-T correction to the GUE-averaged frame potential. In Appendix~\ref{ap:stabilizer}, we consider the initial ensemble to be the random stabilizer-state ensemble and prove that diagonal Hamiltonians can evolve it into a state $k$-design. In Appendix~\ref{ap:Theorem3}, we prove Theorem~3 in the main text.

\section{Derivation of Eq.~(2) and proof of Theorem~1}
\label{ap:Theorem1}

In this section, we first derive Eq.~(2) of the main text and then prove Theorem~1.

We begin with the ensemble $\mathcal{E}=\{e^{-iHt}\ket{\psi}\mid t\sim \mathrm{Unif}[0,T],\ket{\psi}\sim \mathcal{E}'\}$, whose frame potential is
\begin{equation}
    F^{(k)}_{\mathcal{E}}=\mathop{\mathbb{E}}\limits_{\ket{\psi} ,\ket{\phi} \sim \mathcal{E}'} \int _{0}^{T}\frac{dtdt'}{T^{2}} |\bra{\psi}e^{iHt}e^{-iHt'}\ket{\phi}|^{2k}.
\end{equation}
Expanding the Hamiltonian $H$ in the eigenbasis $H=\sum _{n}E_{n}\ket{E_{n}}\bra{E_{n}}$, we obtain
\begin{equation}
F^{(k)}_{\mathcal{E}}=\mathop{\mathbb{E}}\limits_{\ket{\psi} ,\ket{\phi} \sim \mathcal{E}'}\sum _{\vec{n},\vec{m}}\int _{0}^{T}\frac{dtdt'}{T^{2}}\exp\left[ i(t-t')\left(\sum _{a=1}^{k}E_{n_{a}}-\sum _{a=1}^{k}E_{m_{a}}\right) \right]\prod _{a=1}^{k}\braket{ E_{m_{a}} }{ \psi }\braket{ \psi }{ E_{n_{a}} } \braket{ E_{n_{a}} }{ \phi } \braket{ \phi }{ E_{m_{a}} } .
\end{equation}
The time integral part equals
\begin{equation}
    \int _{0}^{T}\frac{dtdt'}{T^{2}}\exp\left[ i(t-t') \Delta E\right]=\left(\frac{\sin (\Delta ET/2)}{\Delta ET/2}\right)^2,
\end{equation}
where $\Delta E:= \sum _{a=1}^{k}E_{n_{a}}-\sum _{a=1}^{k}E_{m_{a}}$. In the limit $T\to \infty$, it becomes $\delta _{\Delta E,0}$. Assuming the $k$-th no-resonance condition, $\Delta E=0$ if and only if there exists a permutation $\sigma \in S_{k}$ such that $m_{a}=n_{\sigma(a)}$ for $a=1,2,\cdots,k$. Thus the sums over $\vec{n}$ and $\vec{m}$ can be replaced by sums over $\vec{n}$ and the permutations.
\begin{equation}
\begin{aligned}
F^{(k)}_{\mathcal{E}}&=\mathop{\mathbb{E}}\limits_{\ket{\psi} ,\ket{\phi} \sim \mathcal{E}'}\sum _{\vec{n}}\sum _{\sigma \in S_{k}}\frac{1}{\Omega(\vec{n})}\prod _{a=1}^{k}\braket{ E_{n_{\sigma(a)}} }{ \psi }\braket{ \psi }{ E_{n_{a}} } \braket{ E_{n_{a}} }{ \phi } \braket{ \phi }{ E_{n_{\sigma(a)}} }\\
&=\mathop{\mathbb{E}}\limits_{\ket{\psi} ,\ket{\phi} \sim \mathcal{E}'}\sum _{\vec{n}}\frac{k!}{\Omega(\vec{n})}\prod _{a=1}^{k} \braket{ E_{n_{a}} }{ \psi }\braket{ \psi }{ E_{n_{a}} }\braket{ E_{n_{a}} }{ \phi } \braket{ \phi }{ E_{n_{a}} }\\
&=k!\sum _{\vec{n}}\frac{1}{\Omega(\vec{n})}\left( \bra{E_{n_{1}}\cdots E_{n_{k}}} \rho ^{(k)}_{\mathcal{E}'}\ket{E_{n_{1}}\cdots E_{n_{k}}}  \right) ^{2}.
\label{eq:sm_derive_eq2}
\end{aligned}
\end{equation}
This is precisely Eq.~(2) in the main text. The factor \(1/\Omega(\vec n)\) corrects the overcounting caused by repeated indices in \(\vec n\). Indeed, if some indices \(n_a\) coincide, different permutations \(\sigma\in S_k\) can generate the same tuple \((n_{\sigma(1)},\ldots,n_{\sigma(k)})\). For a tuple \(\vec n\) with multiplicities \(r_i=\sum_{a=1}^k\delta_{i,n_a}\), each distinct tuple is counted \(\Omega(\vec n)=\prod_i r_i!\) times. Although this factor is necessary for the exact expression, it does not affect the leading contribution in $D$ in the GUE-averaged calculation below, which comes from tuples with pairwise distinct indices \(n_1,\ldots,n_k\).

Now we prove Theorem~1 in the main text. We will use the following standard Weingarten formula. For any operator \(O\) acting on \((\mathbb C^D)^{\otimes q}\),
\begin{equation}
    \mathop{\mathbb{E}}\limits_{U\sim\mathrm{Haar}}
    \left[
        U^{\otimes q} O U^{\dagger\otimes q}
    \right]
    =
    \sum_{\pi,\sigma\in S_q}
    \mathrm{Wg}(\pi^{-1}\sigma,D)
    \operatorname{Tr}\!\left[V_D^\dagger(\sigma)O\right]
    V_D(\pi).
\end{equation}
Here \(V_D(\pi)\) denotes the permutation operator associated with \(\pi\in S_q\). In the large-\(D\) limit with fixed \(q\), the Weingarten function satisfies
\begin{equation}
    \mathrm{Wg}(\mathrm{id},D)
    =
    D^{-q}+O(D^{-q-2}),
    \qquad
    \mathrm{Wg}(\tau,D)
    =
    O\!\left(D^{-2q+\#\tau}\right)
    \quad
    (\tau\neq \mathrm{id}),
\end{equation}
where \(\#\tau\) denotes the number of cycles of the permutation \(\tau\). 

Since Eq.~\eqref{eq:sm_derive_eq2} depends on the Hamiltonian only through its eigenbasis, the GUE average reduces to a Haar average over the eigenvectors. Writing \(|E_n\rangle=U|n\rangle\) with \(U\sim \mathrm{Haar}\), and applying the Weingarten formula, we obtain
\begin{equation}
    \begin{aligned}
        \mathop{\mathbb{E}}\limits_{H\sim \mathrm{GUE}}F^{(k)}_{\mathcal{E}}&=\mathop{\mathbb{E}}\limits_{\ket{\psi} ,\ket{\phi} \sim \mathcal{E}'}k! \sum_{\vec{n}} \frac{1}{\Omega(\vec{n})}\bra{n_{1} \dots n_{k} n_{1} \dots n_{k}} \mathop{\mathbb{E}}\limits_{U\sim\mathrm{Haar}}[U^{\otimes 2k} (\ket{\psi}\bra{\psi}^{\otimes k} \otimes \ket{\phi}\bra{\phi}^{\otimes k}) U^{\dagger \otimes 2k}] \ket{n_{1} \dots n_{k} n_{1} \dots n_{k}}\\
        &=k!\sum_{\pi,\sigma \in S_{2k}}\mathrm{Wg}(\pi^{-1}\sigma,D)\sum_{\vec{n}}\frac{1}{\Omega(\vec{n})}\bra{n_{1} \dots n_{k} n_{1} \dots n_{k}}V_D(\pi)\ket{n_{1} \dots n_{k} n_{1} \dots n_{k}}\\
&\hspace{5cm}\times\mathop{\mathbb{E}}\limits_{\ket{\psi} ,\ket{\phi} \sim \mathcal{E}'}(\bra{\psi}^{\otimes k}\otimes\bra{\phi}^{\otimes k})V_D(\sigma^{-1})(\ket{\psi}^{\otimes k}\otimes\ket{\phi}^{\otimes k}).
\label{eq:sm_stateGUE}
    \end{aligned}
\end{equation}
Now we examine the leading order of Eq.~\eqref{eq:sm_stateGUE}. For fixed \(\pi\) and \(\sigma\), the contribution is a product of three terms. We first consider the second term, which is maximized when \(\pi\) only contains swaps between the \(a\)-th element and the \((a+k)\)-th element \((a=1,2,\ldots,k)\). In this case,
\begin{equation}
    \sum_{\vec{n}}\frac{1}{\Omega(\vec{n})}\bra{n_{1} \dots n_{k} n_{1} \dots n_{k}}V_D(\pi)\ket{n_{1} \dots n_{k} n_{1} \dots n_{k}}=\sum_{\vec{n}}\frac{1}{\Omega(\vec{n})}=D^k+O(D^{k-1}).
\end{equation}
For any other \(\pi\), there will be at least one constraint of the form \(n_i=n_j\), and hence the second term is at most \(O(D^{k-1})\). The permutations that only contain swaps between the \(a\)-th element and the \((a+k)\)-th element form a subgroup  of $S_{2k}$, which we denote by \(E_{2k}\). The first term is the Weingarten function, which is maximized when \(\pi=\sigma\), giving \(\mathrm{Wg}(\pi^{-1}\sigma,D)=D^{-2k}+O(D^{-2k-1})\). The third term is the frame potential of \(\mathcal E'\) of some order. Adding all terms with \(\pi=\sigma\in E_{2k}\) and keeping only the leading order, we obtain
\begin{equation}
    \frac{k!}{D^k}\left(1+\sum_{m=1}^k \binom{k}{m}F^{(m)}_{\mathcal{E}'}\right).
\end{equation}
Below we show that this is precisely the leading contribution of Eq.~\eqref{eq:sm_stateGUE}. The third term always gives the frame potential of some order. Therefore, it is enough to show that for all other choices of \(\pi\) and \(\sigma\), the product of the first two terms is  at most \(O(D^{-k-1})\). There are two cases. If \(\pi\notin E_{2k}\), the second term is at most \(O(D^{k-1})\), while the Weingarten function is at most \(O(D^{-2k})\). Their product is therefore at most \(O(D^{-k-1})\). If \(\pi\in E_{2k}\) but \(\sigma\neq\pi\), the second term is \(O(D^k)\), but the Weingarten function is at most $O(D^{-2k-1})$. Their product is again at most \(O(D^{-k-1})\). Hence all terms except those with \(\pi=\sigma\in E_{2k}\) are subleading by at least one power of \(D\). This proves Theorem~1 in the main text.

\section{Porter-Thomas analysis for the product-basis initial ensemble}
\label{ap:PT}
In this section, we use the Porter-Thomas distribution to show that the frame potential of the evolved ensemble $\mathcal{E}$ with a product-basis initial ensemble $\mathcal{E}'$ matches the Haar value in the leading order. We start with Eq.~(6) of the main text,
\begin{equation}
    F^{(k)}_{\mathcal{E}}=\frac{k!}{D^{2}}\sum _{m,m'}\sum _{\vec{n}}\frac{1}{\Omega(\vec{n})}\prod _{a=1}^{k} |  \langle E_{n_{a}}|m\rangle |^{2}|\langle E_{n_{a}}|m'\rangle|^{2}.
    \label{eq:sm_fpbasis}
\end{equation}
We assume that the eigenstate overlaps follow Porter-Thomas statistics. More precisely, we treat \(x_{nm}:=|\langle E_n|m\rangle|^2\), with \(n,m=1,\ldots,D\), as independent random variables. For complex eigenstate wavefunctions, \(x_{nm}\) has probability density \(p(x)=D\exp(-Dx)\), while for real eigenstate wavefunctions, \(p(x)=\sqrt{D/(2\pi x)}\exp(-Dx/2)\). Correspondingly, \(\mathbb E[x_{nm}^p]=p!/D^p\) in the complex case and \(\mathbb E[x_{nm}^p]=(2p-1)!!/D^p\) in the real case. In the following, we mainly use the properties \(\mathbb E[x_{nm}]=1/D\) and \(\mathbb E[x_{nm}^p]=\Theta(D^{-p})\), which hold in both cases. Therefore the two cases can be treated uniformly. 

Now we calculate the expectation value of $F^{(k)}_{\mathcal{E}}$. Dividing it into the $m=m'$ part and the $m\neq m'$ part, we obtain
\begin{equation}
    \mathbb{E}[F^{(k)}_{\mathcal{E}}]=\frac{k!}{D^{2}}\sum_m \sum _{\vec{n}}\frac{1}{\Omega(\vec{n})}\mathbb{E}\left[\prod_{a=1}^k x_{n_am}^2\right]+\frac{k!}{D^{2}}\sum_{m\neq m'}\sum _{\vec{n}}\frac{1}{\Omega(\vec{n})}\mathbb{E}\left[\prod_{a=1}^k x_{n_am}x_{n_am'}\right].
    \label{eq:sm_ptaverage}
\end{equation}
We first estimate the $m=m'$ part in Eq.~\eqref{eq:sm_ptaverage}. Since \(\mathbb E[x_{nm}^p]=\Theta(D^{-p})\), \(\mathbb E\!\left[\prod_{a=1}^k x_{n_a m}^2\right]=\Theta(D^{-2k})\) holds for any choice of \(n_1,\ldots,n_k\), including cases with repeated indices. Thus the leading order of this term comes from $n_1,\cdots,n_k$ being pairwise distinct.
\begin{equation}
    \begin{aligned}
        \frac{k!}{D^{2}}\sum_m \sum _{\vec{n}}\frac{1}{\Omega(\vec{n})}\mathbb{E}\left[\prod_{a=1}^k x_{n_am}^2\right]&=\frac{k!}{D^{2}}\cdot D\cdot\left( D(D-1)\cdots(D-k+1)\frac{\xi^k}{D^{2k}}+O(1/D^{k+1})\right)\\
        &=\frac{k!\xi^k}{D^{k+1}}+O(1/D^{k+2}).
    \end{aligned}
\end{equation}
Here $\xi=2$ in the complex case and $\xi=3$ in the real case, coming from the second-order moment of $x_{nm}$. The $m\neq m'$ part in Eq.~\eqref{eq:sm_ptaverage} can be analyzed similarly. 
\begin{equation}
    \begin{aligned}
        \frac{k!}{D^{2}}\sum_{m\neq m'}\sum _{\vec{n}}\frac{1}{\Omega(\vec{n})}\mathbb{E}\left[\prod_{a=1}^k x_{n_am}x_{n_am'}\right]&=\frac{k!}{D^{2}}\cdot(D^2-D)\cdot \left(D(D-1)\cdots(D-k+1)\frac{1}{D^{2k}}+O(1/D^{k+1})\right)\\
        &=\frac{k!}{D^k}+O(1/D^{k+1}).
    \end{aligned}
\end{equation}
Combining these two parts, we find the leading order contribution to $\mathbb{E}[F^{(k)}_{\mathcal{E}}]$ comes from the $m\neq m'$ part and equals  $k!/D^k$, matching the Haar value at leading order. 

For comparison, if the initial ensemble $\mathcal{E}'$ consists of  a single state $\ket{\psi}$ and we make the same Porter-Thomas assumption for the overlaps $|\braket{E_n}{\psi}|^2$, the leading order of $F^{(k)}_{\mathcal{E}}$ is $k!\xi^k/D^k$ after performing the same calculation, which deviates from the Haar value.

\section{Proof of Theorem~2}
\label{ap:Theorem2}

In this section, we first review a no-go theorem introduced in Ref.~\cite{Cui2025}, and then prove  Theorem~2 in the main text.

Although Ref.~\cite{Cui2025} studies unitary ensembles, its proof directly implies a no-go theorem for certain state ensembles. We record this state-ensemble version as follows.
\begin{theorem}[Theorem~1 in Ref.~\cite{Cui2025}]
    Consider a product-state ensemble $\mathcal{E}':=\{\ket{u_1}\otimes\cdots\otimes\ket{u_N}\}$, where each $\ket{u_i}$ is sampled independently from an exact one-qubit state $2$-design. Consider the ensemble $\mathcal{E}:=\{e^{-iHt}\ket{\psi}\mid (H,t)\sim\mathcal{D},\ket{\psi}\sim \mathcal{E}'\}$, where $\mathcal{D}$ is an arbitrary distribution over Hamiltonians and evolution times, and all Hamiltonians in $\mathcal{D}$ are $q$-local. Then $\mathcal{E}$ cannot form an approximate state $2$-design with additive error $\varepsilon<\frac{1}{12^qN}$ for $1$-D geometry, or $\varepsilon<\frac{1}{12^qN^q}$ for all-to-all geometry.  
\end{theorem}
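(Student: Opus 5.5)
The plan is to use a conserved quantity as a witness that distinguishes $\mathcal{E}$ from Haar. Energy is conserved under $e^{-iHt}$, and $\langle\psi|H|\psi\rangle^2$ is a second-moment observable: $\langle\psi|H|\psi\rangle^2=\mathrm{tr}[(H\otimes H)(\ket{\psi}\bra{\psi})^{\otimes 2}]$. Hence for any time distribution we have $\mathbb{E}_{\mathcal{E}}\langle\psi|H|\psi\rangle^2=\mathbb{E}_{\mathcal{E}'}\langle\psi_0|H|\psi_0\rangle^2$. In the statement of Cui et al., $H$ is $q$-local and each $\ket{u_i}$ is drawn from a one-qubit $2$-design, so $\ket{\psi_0}$ is a product state.

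The key steps, in order, are as follows.

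First, compute the initial value via the Pauli expansion $H=\sum_P h_P P$ with $q$-local terms. Then $\mathbb{E}\langle\psi_0|H|\psi_0\rangle^2=\sum_{P,Q}h_Ph_Q\,\mathbb{E}\langle P\rangle\langle Q\rangle$. For independent single-qubit $2$-designs, $\mathbb{E}\langle\sigma^a\rangle\langle\sigma^b\rangle=\delta_{ab}/3$ and $\mathbb{E}\langle\sigma^a\rangle=0$. This gives $\sum_P h_P^2\,3^{-|P|}\ge 3^{-q}\sum_P h_P^2$.

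Second, compute the Haar value: $\mathbb{E}_{\rm Haar}\langle\psi|H|\psi\rangle^2=\frac{\mathrm{tr}(H)^2+\mathrm{tr}(H^2)}{D(D+1)}=\frac{D\sum_P h_P^2}{D(D+1)}\approx \sum_P h_P^2/D$, taking $H$ traceless, which is allowed by shifting the identity component. This value is exponentially small.

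Third, convert the moment gap into a bound on the additive error. Since $|\mathrm{tr}[(H\otimes H)(\rho^{(2)}_{\mathcal{E}}-\rho^{(2)}_{\rm Haar})]|\le\lVert H\rVert_\infty^2\,\varepsilon$, we obtain
\begin{equation}
\varepsilon\ge\frac{3^{-q}\sum_P h_P^2-\sum_P h_P^2/(D+1)}{\lVert H\rVert_\infty^2}.
\end{equation}

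Fourth, bound $\lVert H\rVert_\infty$ against $\sum_P h_P^2$ using locality. By the triangle inequality $\lVert H\rVert_\infty\le\sum_P|h_P|$, and Cauchy--Schwarz over the number $L$ of nonzero terms gives $\lVert H\rVert_\infty^2\le L\sum_P h_P^2$. For $q$-local Paulis, $L\le 3^q\cdot$(number of supports). The number of supports is $O(N)$ in 1D, roughly $qN$, and $O(N^q)$ for all-to-all geometry. This yields $\varepsilon\gtrsim 1/(9^q\cdot\#\text{supports})$, which absorbs into $1/(12^qN)$ after accounting for constants and the subtracted $1/(D+1)$ term. Because $\mathcal{D}$ may mix Hamiltonians, the witness must be fixed before averaging. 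I would handle this by applying the argument per Hamiltonian and noting that the bound is uniform. If that is insufficient, I would instead use the per-$H$ conditional ensemble.

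The main obstacle is this mixture over Hamiltonians. A single witness $H$ does not obviously work when different $H$ are drawn, because energy conservation holds only for the $H$ actually used. I would try to exploit that $\langle\psi|H|\psi\rangle^2$ for each fixed $H$ still lower-bounds the trace distance of that component's moment operator, and then argue via convexity that at least one approach survives. If convexity fails in that direction, I would follow Cui et al. and use a witness built from a generic local operator averaged over the distribution. Getting the precise constant $12^q$ is secondary.
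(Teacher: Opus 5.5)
\paragraph{Where the proposal breaks.} Your Steps 1--4 are correct when $\mathcal{D}$ is supported on a single Hamiltonian, with arbitrary times. That is exactly the energy-conservation witness the paper uses for its own Theorem~2; the paper gives no proof of this cited statement and imports it from Cui et al. You combine it correctly with $\mathbb{E}_\psi\bra{\psi}H\ket{\psi}^2=\sum_P h_P^2 3^{-|P|}$ and $\lVert H\rVert_\infty^2\le|\mathcal{L}|\sum_P h_P^2$.

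But the point of the statement is that $\mathcal{D}$ may be an arbitrary mixture of $q$-local Hamiltonians, and your proposal leaves this unresolved. The remedies you suggest do not work:
\begin{itemize}
\item \emph{Per-$H$ bounds and convexity.} The moment operator is $\rho^{(2)}_{\mathcal{E}}=\mathbb{E}_H\rho^{(2)}_{H}$, and convexity gives only $\lVert\rho^{(2)}_{\mathcal{E}}-\rho^{(2)}_{\mathrm{Haar}}\rVert_1\le\mathbb{E}_H\lVert\rho^{(2)}_{H}-\rho^{(2)}_{\mathrm{Haar}}\rVert_1$. So a uniform lower bound on every conditional ensemble says nothing about the mixture, which can be much closer to Haar than any component.
\item \emph{A fixed witness $H_0\otimes H_0$.} The other components contribute $\bra{\phi}H_0\ket{\phi}^2\ge 0$ with no useful lower bound. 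You only retain $\Pr[H=H_0]\cdot 3^{-q}\sum_P h_P^2$, which vanishes for continuously distributed couplings.
\item \emph{An averaged witness.} $\mathbb{E}_{H'}[H'\otimes H']$ is not conserved under $e^{-iHt}$ for $H'\neq H$.
\end{itemize}

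\paragraph{The missing idea.} You need a witness independent of $H$ that dominates every $q$-local energy at once. Let $\mathcal{L}$ be the set of non-identity $q$-local Pauli strings and $O:=\sum_{P\in\mathcal{L}}P\otimes P$, so that $\mathrm{tr}[O(\ket{\phi}\bra{\phi})^{\otimes 2}]=\sum_{P\in\mathcal{L}}\bra{\phi}P\ket{\phi}^2$.
\begin{itemize}
\item For any traceless $H=\sum_{P\in\mathcal{L}}h_PP$, Cauchy--Schwarz gives $\bra{\phi}H\ket{\phi}^2\le\big(\sum_P h_P^2\big)\sum_{P\in\mathcal{L}}\bra{\phi}P\ket{\phi}^2$.
\item Take $\ket{\phi}=e^{-iHt}\ket{\psi}$. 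Energy conservation and your Step 1 give $\mathbb{E}_{\psi}\sum_{P\in\mathcal{L}}\bra{\phi}P\ket{\phi}^2\ge 3^{-q}$ for every $(H,t)$ in the support of $\mathcal{D}$.
\item This quantity is linear in the moment operator, so it survives any mixture: $\mathrm{tr}[O\rho^{(2)}_{\mathcal{E}}]\ge 3^{-q}$.
\item Since $\mathrm{tr}[O\rho^{(2)}_{\mathrm{Haar}}]=|\mathcal{L}|/(D+1)$ and $\lVert O\rVert_\infty\le|\mathcal{L}|$, you get $\varepsilon\ge 3^{-q}/|\mathcal{L}|-1/(D+1)$.
\item In 1D, $|\mathcal{L}|<4^qN$, since there are at most $N$ windows of $q$ consecutive sites. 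For all-to-all geometry, $|\mathcal{L}|\le\binom{N}{q}4^q\le 4^qN^q$. These counts give the stated thresholds, with $12^q=3^q\cdot 4^q$.
\end{itemize}
The quantitative bound is the same as your fixed-$H$ one. The Cauchy--Schwarz step has simply moved from $\lVert H\rVert_\infty$ onto the state, and that is what makes the witness independent of $H$.
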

This theorem precludes choosing the initial ensemble to be, for example, a tensor product of one-qubit Haar random states or one-qubit stabilizer states,  if one requires an exponentially small additive error.  

We now prove Theorem~2 in the main text, which provides a more useful criterion for choosing an initial ensemble compatible with a given Hamiltonian. Consider the ensemble $\mathcal{E}=\{e^{-iHt}\ket{\psi}\mid t\sim\mathcal{D},\ket{\psi}\sim\mathcal{E}'\}$, where $\mathcal{D}$ is an arbitrary time distribution. Define a quantity
\begin{equation}
    E=\mathop{\mathbb{E}}_{\ket{\phi}\sim \mathcal{E}}\bra{\phi}H\ket{\phi}^2.
\end{equation}
On the one hand, $E=\mathrm{tr}[H^{\otimes 2}\rho ^{(2)}_{\mathcal{E}}]$. If $\mathcal{E}$ forms an approximate state 2-design with additive error $\varepsilon$, we obtain
\begin{equation}
        |E-\mathrm{tr}[H^{\otimes 2}\rho ^{(2)}_{\mathrm{Haar}}]|=|\mathrm{tr}[H^{\otimes 2}(\rho^{(2)}_{\mathcal{E}}-\rho ^{(2)}_{\mathrm{Haar}})]|\leq \varepsilon||H^{\otimes 2}||_{\infty}=\varepsilon||H||_\infty^2.
\end{equation}
On the other hand,
\begin{equation}
E=\mathop{\mathbb{E}}_{\ket{\psi}\sim\mathcal{E}',t\sim\mathcal{D}}\bra{\psi}e^{iHt}He^{-iHt}\ket{\psi}^2=\mathop{\mathbb{E}}_{\ket{\psi}\sim\mathcal{E}'}\bra{\psi}H\ket{\psi}^2.
\end{equation}
Thus we have
\begin{equation}
    \varepsilon\geq \frac{|\mathop{\mathbb{E}}_{\ket{\psi}\sim\mathcal{E}'}\bra{\psi}H\ket{\psi}^2-\mathrm{tr}[H^{\otimes 2}\rho ^{(2)}_{\mathrm{Haar}}]|}{||H||_\infty^2}.
\end{equation}
Expand $H$ in the Pauli basis as $H=\sum_{P\in\mathcal{P}_N}h_PP$. Using $\rho^{(2)}_{\mathrm{Haar}}=(\mathbb{I}+\mathrm{SWAP})/[D(D+1)]$ and $\mathrm{tr}(H)=0$, we get
\begin{equation}
    \mathrm{tr}[H^{\otimes 2}\rho ^{(2)}_{\mathrm{Haar}}]=\frac{\mathrm{tr}(H^2)}{D(D+1)}=\frac{\sum_{P\in\mathcal{P}_N}h_P^2}{D+1}.
\end{equation}
This proves the theorem.

\section{Energy distribution for bitstrings in the $X$-, $Y$- and $Z$-bases}
\label{ap:ldos}

\begin{figure}[t]
    \centering
    \includegraphics[width=\linewidth]{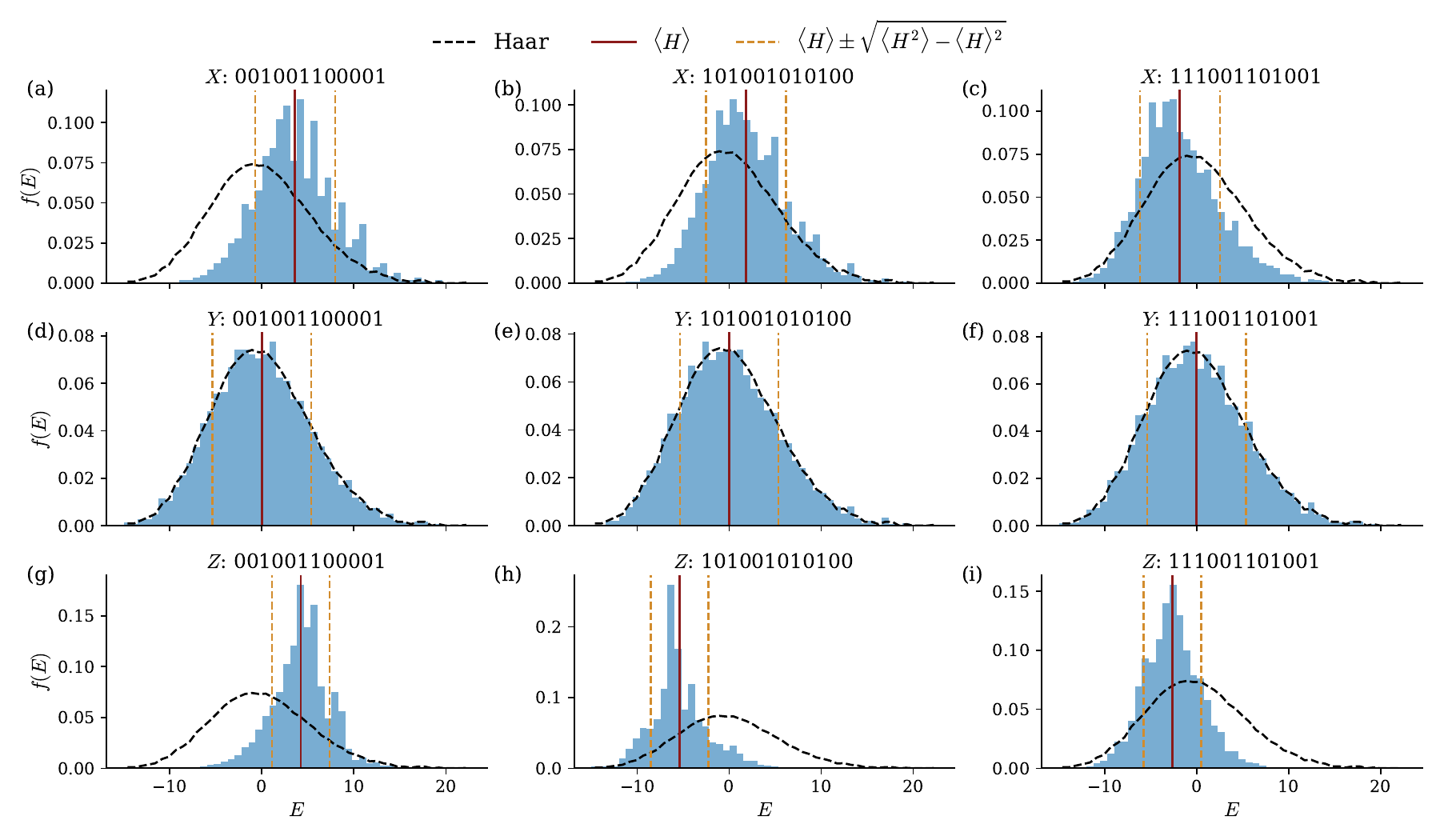}
    \caption{Local density of states $f_{{m}}(E)$ for several states chosen from the $X$-basis (a-c), $Y$-basis (d-f) and $Z$-basis (g-i) for the mixed-field Ising Hamiltonian with parameters $J=1$, $h_x=0.9045$, $h_z=0.809$ and $N=12$. The black dashed curve denotes the infinite-temperature (Haar-averaged) LDOS $f_{\mathrm{Haar}}(E)$. The red vertical line marks the mean energy $\bra{m}H\ket{m}$ and the orange vertical dashed line marks $\bra{m}H\ket{m}\pm \sqrt{\bra{m}H^2\ket{m}-\bra{m}H\ket{m}^2}$. }
    \label{fig:sm_energydistri}
\end{figure}

In this section, we provide an energy-distribution diagnostic that helps explain why the $Y$-basis is chosen as the  initial ensemble for generating $k$-designs under the mixed-field Ising Hamiltonian, while the $X$- and $Z$-bases are not. This diagnostic is complementary to the bound derived from energy conservation in the main text (Theorem~2).

For a state $\ket{m}$, we define a distribution
\begin{equation}
    f_{{m} }(E)=\sum _{n}|\braket{ m }{ E_{n} } |^{2}\delta(E-E_{n}).
\end{equation}
This distribution is often referred to as the local density of states (LDOS)~\cite{Torres_Herrera_2015}. The $k$-th moment of this distribution is
\begin{equation}
    \int E^{k}f_{{m} }(E)dE=\bra{m} H^{k}\ket{m} .
\end{equation}
The Haar-averaged LDOS is defined as
\begin{equation}
    f_{\mathrm{Ha a r}}(E)=\mathop{\mathbb{E}}_{\ket{m} \sim \mathrm{Ha ar}}f_{{m} }(E)=\sum _{n}\frac{1}{D}\delta(E-E_{n}),
\end{equation}
which is simply the normalized spectral density, corresponding to an infinite temperature ensemble. Its $k$-th moment is
\begin{equation}
    \int E^{k}f_{\mathrm{Ha ar} }(E)dE=\frac{1}{D}\mathrm{tr}(H^{k})=\mathop{\mathbb{E}}_{\ket{m} \sim \mathrm{Ha ar}}\bra{m} H^{k}\ket{m} .
\end{equation}

Figure~\ref{fig:sm_energydistri} shows the LDOS of several states $\ket{m}$ chosen from the $X$-, $Y$- and $Z$-bases  for the mixed-field Ising Hamiltonian.
We also show the infinite-temperature (Haar-averaged) distribution for comparison. The LDOS of the $Y$-basis states closely follows the infinite-temperature distribution, whereas clear deviations are visible for the $X$- and $Z$-basis states. This behavior can be understood from the low-order energy moments. For every $Y$-basis state $\ket{y}$, $\bra{y}H\ket{y}=0$ and $\bra{y}H^{2}\ket{y}=(1/D)\mathrm{tr}(H^{2})$, independent of the particular basis state. Therefore, each $Y$-basis state predominantly samples highly excited eigenstates in the bulk of the spectrum with the largest density of states, where chaotic behavior and random-matrix-like eigenstate statistics are expected. In contrast, the energy population for $Z$- and $X$-basis states varies strongly among the specific configurations. Consequently, the eigenstates they probe are strongly modulated by the average energy, and differ from the universal random matrix behavior.

\section{Derivation of the finite-$T$ correction}
\label{ap:finiteT}
In this section, we consider the ensemble $\mathcal{E}=\{e^{-iHt}\ket{\psi}\mid t\sim \mathrm{Unif}[0,T],\ket{\psi}\sim\mathcal{E}'\}$ and show the finite-$T$ correction to the GUE-averaged frame potential is $O(F^{(k)}_{\mathcal{E}'}/T)$.

First, we write the frame potential $F^{(k)}_{\mathcal{E}}$ in an integral form.
\begin{equation}
    F^{(k)}_{\mathcal{E}}=\frac{2}{T}\int _{0}^{T}d\tau\left( 1-\frac{\tau}{T} \right) f ^{(k)}_{\mathcal{E}'}(\tau)
\end{equation}
where $f ^{(k)}_{\mathcal{E}'}(\tau):=\mathbb{E}_{\ket{\psi},\ket{\phi}\sim\mathcal{E}'}|\bra{\psi}e^{iH\tau}\ket{\phi}|^{2k}$. Expanding $H$ in the eigenbasis, we get
\begin{equation}
    f ^{(k)}_{\mathcal{E}'}(\tau)=\sum_{\vec{n},\vec{n}'}\exp\left[i\tau\left(\sum_{a=1}^k E_{n_a}-\sum_{a=1}^kE_{n'_a}\right)\right]\mathop{\mathbb{E}}_{\ket{\psi},\ket{\phi}\sim\mathcal{E}'}\bra{E_{\vec{n}}E_{\vec{n}'}}\phi^k\psi^k\rangle\braket{\phi^k\psi^k}{E_{\vec{n}'}E_{\vec{n}}}
\end{equation}
where $\ket{E_{\vec{n}}E_{\vec{n}'}}:=\ket{E_{n_1}\cdots  E_{n_k}E_{n'_1}\cdots E_{n'_k}}$ and $\ket{\phi^k\psi^k}:=\ket{\phi}^{\otimes k }\otimes\ket{\psi}^{\otimes k}$. For the GUE ensemble, the eigenvalues and eigenvectors are independent. We therefore separate the GUE average into an eigenvalue part and an eigenvector part. Define
\begin{equation}
    g_{\vec{n},\vec{n}'}(\tau):=\int DE\exp\left[ i\left( \sum _{a=1}^kE_{n_{a}}-\sum _{a=1}^kE_{n'_{a}} \right)\tau \right]
\end{equation}
and
\begin{equation}
    G(\vec{n},\vec{n}',T):=\frac{2}{T}\int _{0}^{T}d\tau\left( 1-\frac{\tau}{T} \right)g_{\vec{n},\vec{n}'}(\tau),
\end{equation}
where $DE=dE_1\cdots dE_DP(E_1,\cdots,E_D)$ is the measure of the GUE eigenvalues, satisfying $\int DE=1$. Also define
\begin{equation}
    M(\vec{n},\vec{n}'):=\mathop{\mathbb{E}}_{\ket{\psi},\ket{\phi}\sim \mathcal{E}'}\bra{\vec{n},\vec{n}'} \mathop{\mathbb{E}}_{U\sim\mathrm{Haar}}[U^{\otimes 2k}\ket{\phi^k\psi^k}\bra{\phi^k\psi^k}U^{\dagger \otimes 2k}]\ket{\vec{n}',\vec{n}}.
\end{equation}
Using these definitions, the GUE-averaged frame potential can be written as
\begin{equation}
    \mathop{\mathbb{E}}_{H\sim \mathrm{GUE}}F^{(k)}_{\mathcal{E}}=\sum _{\vec{n},\vec{n}'}G(\vec{n},\vec{n}',T)M(\vec{n},\vec{n}')
    \label{eq:sm_GUEstatefp}
\end{equation}

For each pair $\vec{n}$ and $\vec{n}'$, we can define a partition of the set $\{1,2,\cdots,2k\}$. A partition of a set is a way of dividing a set into several disjoint subsets. These subsets are called blocks. For example, $P=\{\{1,k+1\},\{2,k+2\},\cdots,\{k,2k\}\}$ is a partition of the above set whose blocks are $\{1,k+1\},\{2,k+2\},\cdots,\{k,2k\}$. Given a pair of   $\vec{n}$ and $\vec{n}'$, define $\vec{x}:=(\vec{n},\vec{n}')$, which means $x_i=n_i$ for $1\leq i\leq k$ and $x_i=n'_{i-k}$ for $k+1\leq i\leq 2k$. $\vec{x}$ naturally induces a partition  of $\{1,2,\cdots,2k\}$: $i$ and $j$ belong to the same block if and only if $x_i=x_j$. We denote this partition by $P(\vec{x})$. 

We now introduce some notation associated with a partition $P$. $|P|$ denotes the number of blocks in $P$. Let $L:=\{1,2,\cdots,k\}$ and $R:=\{k+1,k+2,\cdots,2k\}$.  For a block $B\in P$, define $l_B:=|B\cap L|$, $r_B:=|B\cap R|$, $q_B:=l_B-r_B$.

The key observation is that both $G(\vec{n},\vec{n}',T)$ and $M(\vec{n},\vec{n}')$ depend on $\vec{n}$ and $\vec{n}'$ only through the induced partition $P(\vec{x})$.  Let the blocks of $P(\vec{x})$ be $B_1,\cdots,B_{|P|}$. Since the GUE eigenvalue distribution $P(E_1,\cdots,E_D)$ is symmetric under permutations of  $E_1,\cdots,E_D$, we have
\begin{equation}
    g_{\vec{n},\vec{n}'}(\tau)=\int DE\exp\left[ i\tau (q_{B_1}E_1+\cdots+q_{B_{|P|}}E_{|P|}) \right].
\end{equation}
Therefore $g_{\vec{n},\vec{n}'}(\tau)$ and $G(\vec{n},\vec{n}',T)$ both only depend on the partition $P(\vec{x})$. We denote the corresponding functions by $g_P(\tau)$ and $G_P(T)$. To analyze $M(\vec{n},\vec{n}')$, we perform the Weingarten integration. This gives
\begin{equation}
M(\vec{n},\vec{n}')=\mathop{\mathbb{E}}_{\ket{\psi},\ket{\phi}\sim \mathcal{E}'}\sum _{\pi,\sigma \in S_{2k}}\mathrm{Wg}(\pi ^{-1}\sigma,D)\mathrm{tr}[V_{D}(\sigma ^{-1})(\ket{\phi}\bra{\phi} ^{\otimes k}\otimes \ket{\psi}\bra{\psi} ^{\otimes k})]\bra{\vec{n},\vec{n}'} V_{D}(\pi)\ket{\vec{n}',\vec{n}}.
\label{eq:defMnn}
\end{equation}
In this expression, the only dependence on $\vec{n}$ and $\vec{n}'$ appears in $\bra{\vec{n},\vec{n}'}V_D(\pi)\ket{\vec{n}',\vec{n}}$, which is either $0$ or $1$, depending on which of the \(x_i\)'s are equal to one another
and not on their specific values. Therefore $M(\vec{n},\vec{n}')$ also depends on $\vec{n}$ and $\vec{n}'$ only through the induced partition $P(\vec{x})$. We denote the corresponding  function by $M_P$.

Under this observation, the sum over $\vec{n}$ and $\vec{n}'$ in Eq.~\eqref{eq:sm_GUEstatefp} can be reorganized as a summation over all partitions:
\begin{equation}
    \mathop{\mathbb{E}}_{H\sim \mathrm{GUE}}F^{(k)}_{\mathcal{E}}=\sum_{P}D(D-1)\cdots(D-|P|+1)M_PG_P(T),
\end{equation}
where $D(D-1)\cdots(D-|P|+1)$ counts the number of $\vec{x}$ with $P(\vec{x})=P$. 

Now consider the partitions $P$ with all $q_{B}=0$, namely all blocks $B\in P$ have equal support on $\{ 1,\cdots,k \}$ and $\{ k+1, \cdots,2k\}$.  Note that $P(\vec{x})$ satisfies this condition if and only if $\vec{n}$ equals to $\vec{n}'$ up to a permutation, corresponding to the $k$-th no-resonance condition we used previously. For these partitions, $G_{P}(T)=1$. The sum over these partitions yields the $T\to \infty$ contribution of $\mathop{\mathbb{E}}_{H\sim \mathrm{GUE}}F^{(k)}_{\mathcal{E}}$ obtained in Theorem~1 in the main text. Therefore, the finite-$T$  correction comes from the remaining partitions, namely those with at least one block satisfying $q_B\neq 0$. 

In the following, we analyze $G_P(T)$ and $M_P$, respectively,  at leading order in $D$. We first prove that, for partitions $P$ with at least one nonzero $q_B$, $G_P(T)=O(1/T)$. Then we prove that, for all partitions $P$, $D(D-1)\cdots(D-|P|+1)M_P=O(F^{(k)}_{\mathcal{E}'})$. Combining these two estimates shows that the finite-$T$ correction to the GUE-averaged frame potential is $O(F^{(k)}_{\mathcal{E}'}/T)$.

\subsection{Analysis of $G_P(T)$}
We first analyze the behavior of the function $g_P(\tau)$. Let $P=\{B_1,\cdots,B_l\}$ be a partition with $l=|P|$ blocks. By definition,
\begin{equation}
    g_P(\tau)=\int DE\exp\left[ i\tau (q_{B_1}E_1+\cdots+q_{B_{l}}E_{l}) \right].
\end{equation}

We adopt the GUE convention used in Ref.~\cite{Cotler_2017}. For a $D\times D$ GUE matrix, its diagonal components are independent real Gaussian random variables $\mathcal{N}(0,1/{D})$ and its off-diagonal components are independent complex Gaussian random variables $\mathcal{CN}(0,1/{D})$. With this normalization, the spectral density converges to the Wigner semicircle distribution supported on $[-2,2]$. The joint probability density function of eigenenergies is
\begin{equation}
    P(E_{1},\cdots ,E_{D})=\frac{D^{D^{2}/2}}{(2\pi)^{D/2}\prod _{p=1}^{D}p!}\exp\left[ -\frac{D}{2}\sum _{i}E _{i}^{2} \right]\prod _{1\leq i<j\leq D}(E _{i}-E _{j})^{2},
\end{equation}
and the $n$-point spectral correlation function is defined as
\begin{equation}
    \rho ^{(n)}(E_{1},\cdots ,E_{n}):=\int dE_{n+1}\cdots dE_{D}P(E_{1},\cdots ,E_{D}).
\end{equation}
This function can be compactly expressed as 
\begin{equation}
    \rho ^{(n)}(E_{1},\cdots ,E_{n})=\frac{(D-n)!}{D!}\det[K_D^{(n)}(E_1,\cdots,E_n)],
\end{equation}
where $K_D^{(n)}(E_1,\cdots,E_n)$ is an $n\times n$ matrix. To write out its elements, define 
\begin{equation}
    K_D(x,x'):=\sum_{\alpha=0}^{D-1}\psi_\alpha(x)\psi_{\alpha}(x'),
    \label{eq:sm_KD}
\end{equation}
where $\psi_{\alpha}$ $(\alpha=0,1,\cdots,D-1)$ are $D$ orthonormal functions satisfying
\begin{equation}
    \int dx \,\psi_{\alpha}(x)\psi_{\beta}(x)=\delta_{\alpha\beta}.
\end{equation}
Refer to Ref.~\cite{fyodorov2010introductionrandommatrixtheory} for the details of $\psi_{\alpha}$. The matrix elements of $K^{(n)}_D(E_1,\cdots,E_n)$ are
\begin{equation}
[K^{(n)}_D(E_1,\cdots,E_n)]_{ij}=K_D(E_i,E_j).
\end{equation}

Under these definitions, $g_P(\tau)$ can be written as
\begin{equation}
    \begin{aligned}
        g_{P}(\tau)&=\int dE_{1}\cdots dE_{l}\,\rho ^{(l)}(E_{1},\cdots ,E_{l})\exp[i\tau (q_{B_1}E_1+\cdots+q_{B_{l}}E_{l})]\\
        &=\frac{(D-l)!}{D!}\sum_{\sigma\in S_l}\mathrm{sgn}(\sigma)\int dE_1\cdots dE_l\,K_D(E_1,E_{\sigma(1)})\cdots K_D(E_l,E_{\sigma(l)})\exp[i\tau (q_{B_1}E_1+\cdots+q_{B_{l}}E_{l})].
    \end{aligned}
\end{equation}
For simplicity, let \(s_a=q_{B_a}\tau\) for \(a=1,\ldots,l\). We denote the contribution from a fixed permutation \(\sigma\in S_l\) by
\begin{equation}
    I_\sigma(s_1,\ldots,s_l)
    :=
    \int dE_1\cdots dE_l\,
    \prod_{a=1}^{l}K_D(E_a,E_{\sigma(a)})
    \exp\left(i\sum_{a=1}^{l}s_aE_a\right).
    \label{eq:sm_Isigma}
\end{equation}
Then
\begin{equation}
    g_P(\tau)
    =
    \frac{(D-l)!}{D!}
    \sum_{\sigma\in S_l}
    \mathrm{sgn}(\sigma)\,
    I_\sigma(s_1,\ldots,s_l).
\end{equation}

We now analyze $I_{\sigma}(s_1,\cdots,s_l)$ for a fixed $\sigma$. Let \(\sigma=c_1\cdots c_p\) be the decomposition of \(\sigma\) into cycles. Then the integral in Eq.~\eqref{eq:sm_Isigma} can be  decomposed correspondingly into $p$ integrals. Specifically, consider one cycle \(c=(a_1a_2\cdots a_r)\), whose contribution to $I_{\sigma}$ is
\begin{equation}
    \begin{aligned}
    I_c
    &:=
    \int \prod_{j=1}^{r}dE_{a_j}\,
    K_D(E_{a_1},E_{a_2})K_D(E_{a_2},E_{a_3})
    \cdots K_D(E_{a_r},E_{a_1})
    \exp\left(i\sum_{j=1}^{r}s_{a_j}E_{a_j}\right),
    \end{aligned}
\end{equation}
and we have $I_{\sigma}(s_1,\cdots,s_l)=\prod_{a=1}^p I_{c_a}$.
Using Eq.~\eqref{eq:sm_KD}, we obtain
\begin{equation}
    \begin{aligned}
    I_c
    =
    \sum_{\alpha_1,\ldots,\alpha_r=0}^{D-1}
    &\left[
    \int dE_{a_1}\,
    \psi_{\alpha_r}(E_{a_1})e^{is_{a_1}E_{a_1}}\psi_{\alpha_1}(E_{a_1})
    \right] \\
    &\times
    \left[
    \int dE_{a_2}\,
    \psi_{\alpha_1}(E_{a_2})e^{is_{a_2}E_{a_2}}\psi_{\alpha_2}(E_{a_2})
    \right]\cdots \\
    &\times
    \left[
    \int dE_{a_r}\,
    \psi_{\alpha_{r-1}}(E_{a_r})e^{is_{a_r}E_{a_r}}\psi_{\alpha_r}(E_{a_r})
    \right].
    \end{aligned}
\end{equation}
This naturally leads us to define a \(D\times D\) matrix \(F(s)\) by
\begin{equation}
    F(s)_{\alpha\beta}
    :=
    \int dE\,\psi_\alpha(E)e^{isE}\psi_\beta(E),
    \qquad
    0\leq \alpha,\beta\leq D-1 .
\end{equation}
Then the contribution of the cycle \(c=(a_1a_2\cdots a_r)\) can be written as
\begin{equation}
    I_c
    =
    \mathrm{Tr}\!\left[
        F(s_{a_1})F(s_{a_2})\cdots F(s_{a_r})
    \right].
\end{equation}
Therefore,
\begin{equation}
    I_\sigma(s_1,\ldots,s_\ell)
    =
    \prod_{b=1}^p
    \mathrm{Tr}\!\left[
        \prod_{a\in c_b}F(s_a)
    \right],
\end{equation}
where the product inside each trace is ordered according to the corresponding cycle.

To bound $I_{\sigma}$, we first use Schatten \(\infty\)-norm to bound the matrix $F(s)$. We claim that \(\|F(s)\|_\infty\leq 1\) for all real \(s\). To see this, take any vector \(v\in\mathbb C^D\) such that $\|v\|_2\leq 1$ and define \(f_v(E):=\sum_{\beta=0}^{D-1}v_\beta\psi_\beta(E)\). Since the functions \(\{\psi_\beta\}_{\beta=0}^{D-1}\) are orthonormal, we have \(\|f_v\|_{L^2}=\|v\|_2\). Moreover,
\begin{equation}
    (F(s)v)_\alpha
    =
    \int dE\,\psi_\alpha(E)e^{isE}f_v(E).
\end{equation}
The right-hand side is the projection coefficient of the function \(e^{isE}f_v(E)\) onto \(\psi_\alpha\). Since \(\{\psi_\alpha\}_{\alpha=0}^{D-1}\) is only a finite orthonormal set, not a complete basis, the sum of the squared projection coefficients cannot exceed the full \(L^2\)-norm. This is precisely Bessel's inequality. Therefore,
\begin{equation}
    \|F(s)v\|_2^2=
    \sum_{\alpha=0}^{D-1}
    \left|
    \int dE\,\psi_\alpha(E)e^{isE}f_v(E)
    \right|^2\leq
    \|e^{isE}f_v(E)\|^2_{L^2}
    =
    \|f_v(E)\|^2_{L^2}
    =
    \|v\|_2^2 \leq 1.
\end{equation}
Thus \(\|F(s)\|_\infty\leq 1\) for all $s$. It follows that
\begin{equation}
    |I_{\sigma}(s_1,\cdots,s_l)|=\prod_{b=1}^p \left|\mathrm{Tr}\!\left[
        \prod_{a\in c_b}F(s_a)
    \right]\right|\leq \prod_{b=1}^p D\left\|
        \prod_{a\in c_b}F(s_a) 
    \right\|_\infty\leq \prod_{b=1}^p D\prod_{a\in c_b}\|F(s_a)\|_{\infty}\leq D^p
\end{equation}
and
\begin{equation}
    \frac{(D-l)!}{D!}|I_{\sigma}(s_1,\cdots,s_l)|\leq \frac{(D-l)!D^p}{D!}.
\end{equation}
Note that $p$ is the number of cycles in $\sigma$. Unless $\sigma=\mathrm{id}$, its contribution to $g_P(\tau)$ will vanish in the thermodynamic limit. 

We now analyze the $\sigma=\mathrm{id}$ contribution, which is
\begin{equation}
    I_{\mathrm{id}}(s_1,\cdots,s_l)=\prod_{a=1}^l \int dE\, K_D(E,E)e^{is_aE}.
\end{equation}
In the large $D$ limit, $K_D(E,E)\approx(D/2\pi)\sqrt{4-E^2}$, i.e., the Wigner semicircle distribution. Performing the integral, we get
\begin{equation}
    I_{\mathrm{id}}(s_1,\cdots,s_l)\approx D^l\prod _{a=1}^{l}\frac{J_1(2q_{B_a}\tau)}{q_{B_a}\tau}.
\end{equation}
where $J_1(x)$ denotes the Bessel function of the first kind. And therefore
\begin{equation}
    g_P(\tau)\approx \frac{(D-l)!D^l}{D!}\prod _{a=1}^{l}\frac{J_1(2q_{B_a}\tau)}{q_{B_a}\tau}.
\end{equation}
This approximation is consistent with the fact that $g_P(0)=1$ to leading order of $D$ since $\lim_{x\to 0}J_1(x)/x=1/2$. It shows that $g_P(\tau)$ starts from $1$, and then decays into oscillations with vanishing amplitude. The decay time can be estimated from the first few zeros of $J_1(x)$. Note that this decay time is independent of $D$, owing to our normalization of the GUE, for which the Wigner semicircle distribution is supported on $[-2,2]$. For local Hamiltonians, by contrast, the spectral width generally grows with $N$, so the decay time may decrease with $N$, as argued in the main text. This difference is not important when we only keep the
leading-order scaling in \(D\). Recall the definition of $G_P(T)$:
\begin{equation}
    G_P(T)=\frac{2}{T}\int_0^T d\tau \left(1-\frac{\tau}{T}\right)g_P(\tau).
\end{equation}
As long as $T$ is much larger than the decay time of $g_P(\tau)$, $G_P(T)$ will be of order $O(1/T)$.

\subsection{Analysis of $M_P$}
We first recall the definition of $M_P$, which is  equal to  $M(\vec{n},\vec{n}')$ for any $\vec{x}=(\vec{n},\vec{n}')$ with $P(\vec{x})=P$. 
\begin{equation}
M_P=M(\vec{n},\vec{n}')=\mathop{\mathbb{E}}_{\ket{\psi},\ket{\phi}\sim \mathcal{E}'}\sum _{\pi,\sigma \in S_{2k}}\mathrm{Wg}(\pi ^{-1}\sigma,D)\mathrm{tr}[V_{D}(\sigma ^{-1})(\ket{\phi}\bra{\phi} ^{\otimes k}\otimes \ket{\psi}\bra{\psi} ^{\otimes k})]\bra{\vec{n},\vec{n}'} V_{D}(\pi)\ket{\vec{n}',\vec{n}}, 
\label{eq:sm_defMP}
\end{equation}

Here we introduce some notation. Let  $P=\{ B_{1},\cdots,B_{l} \}$ be a partition with $l=|P|$ blocks. Pick a specific element in the permutation group $S_{2k}$: $s:=(1,k+1)(2,k+2)\cdots(k,2k)$.  For a permutation $\sigma\in S_{2k}$, define $a(\sigma):=|L\cap \sigma(L)|$, which is also equal to $|R\cap\sigma(R)|$. Recall that $L=\{1,2,\cdots,k\}$ and $R=\{k+1,k+2,\cdots,2k\}$. Then $a(\sigma)$ counts how many elements remain in $L$($R$) after the action of $\sigma$.

Now focus on
\begin{equation}
     \bra{\vec{n},\vec{n}'} V_{D}(\pi)\ket{\vec{n}',\vec{n}} = \bra{\vec{n},\vec{n}'} V_{D}(\pi s^{-1})\ket{\vec{n},\vec{n}'},
\end{equation}
which is either $0$ or $1$. And it equals $1$ if and only if $\pi s^{-1}$ only permutes elements within  each block $B_i$. These permutations constitute a subgroup of $S_{2k}$, denoted by $H_P$. Constraining the sum over $\pi$ in Eq.~\eqref{eq:sm_defMP} to this subgroup, we get
\begin{equation}
    \begin{aligned}
M_P=M(\vec{n},\vec{n}')&=\mathop{\mathbb{E}}_{\ket{\psi},\ket{\phi}\sim \mathcal{E}'}\sum _{\pi,\sigma \in S_{2k}}\mathrm{Wg}(\pi ^{-1}\sigma,D)\bra{\phi ^{k}\psi ^{k}} V_{D}(\sigma ^{-1})\ket{\phi ^{k}\psi ^{k}} \bra{\vec{n},\vec{n}'} V_{D}(\pi s ^{-1})\ket{\vec{n},\vec{n}'}\\
&=\mathop{\mathbb{E}}_{\ket{\psi},\ket{\phi}\sim \mathcal{E}'}\sum _{\sigma \in S_{2k}}\sum _{h\in H_{P}}\mathrm{Wg}(s ^{-1}h^{-1}\sigma,D)\bra{\phi ^{k}\psi ^{k}} V_{D}(\sigma ^{-1})\ket{\phi ^{k}\psi ^{k}} \\
&=\mathop{\mathbb{E}}_{\ket{\psi},\ket{\phi}\sim \mathcal{E}'}\sum _{h\in H_{P}}\sum _{\omega \in S_{2k}}\mathrm{Wg}(\omega ^{-1},D)\bra{\phi ^{k}\psi ^{k}} V_{D}(\omega s ^{-1}h^{-1})\ket{\phi ^{k}\psi ^{k}}\\
&=\mathop{\mathbb{E}}_{\ket{\psi},\ket{\phi}\sim \mathcal{E}'}\sum _{h\in H_{P}}\sum _{\omega \in S_{2k}}\mathrm{Wg}(\omega ^{-1},D)\bra{\phi ^{k}\psi ^{k}} V_{D}(\omega s ^{-1}h^{-1}s)\ket{\psi ^{k}\phi ^{k}}\\
&=\sum _{h\in H_{P}}\sum _{\omega \in S_{2k}}\mathrm{Wg}(\omega ^{-1},D)F^{(a(\omega \tilde{h}))}_{\mathcal{E}'}
\end{aligned}
\end{equation}
In the second line, we let $h=\pi s^{-1}$. In the third  line, we let $\omega=\sigma^{-1}hs$. In the fifth line, we denote $\tilde{h}:=s^{-1}hs$ (we change $h^{-1}$ to $h$ since $H_P$ is a group) and use the identity
\begin{equation}
    \mathop{\mathbb{E}}_{\ket{\psi},\ket{\phi}\sim \mathcal{E}'}\bra{\phi ^{k}\psi ^{k}} V_{D}(\sigma)\ket{\psi ^{k}\phi ^{k}}=F^{(a(\sigma))}_{\mathcal{E}'}.
\end{equation}

We first consider the contribution from $\omega=\mathrm{id}$, in which case $\mathrm{Wg}(\mathrm{id},D)=O(1/D^{2k})$ and the frame potential term becomes $F^{(a(\tilde{h}))}_{\mathcal{E}'}$. It  can be proved that $a(\tilde{h})=a(s^{-1}hs)=a(h)$. Indeed,
\begin{equation}
    a(s^{-1}hs)=|L\cap s^{-1}hs(L)|=|L\cap s^{-1}h(R)|=|s(L)\cap h(R)|=|R\cap h(R)|=a(h).
\end{equation}
To extract the leading order contribution of $\sum_{h\in H_P}\mathrm{Wg}(\mathrm{id},D)F_{\mathcal{E}'}^{(a(h))}$, we need to find out which elements $h\in H_P$ minimize $a(h)$ since the state frame potential is non-increasing in its order. Recall that $h$ only permutes elements within each block of $P$ and  $a(h)$ means how many elements of $L$ remain in $L$ after the action of $h$. If a block $B$ is entirely contained in $L$ or $R$, then permutations inside $B$ do not help move elements from $L$ to $R$. However, if  $B$ has both support on $L$ and $R$, we can move at most $\mathrm{min}(l_B,r_B)$ elements from $L$ to $R$. Therefore,
\begin{equation}
    a_{\mathrm{min}}(P):=\mathop{\mathrm{min}}_{h\in H_P}a(h)=k-\sum_{B\in P}\mathrm{min}(l_B,r_B).
\end{equation}
And the leading order contribution to $D(D-1)\cdots(D-|P|+1)M_P$ from $\omega=\mathrm{id}$ is
\begin{equation}
    O\left(\frac{F^{(a_{\mathrm{min}}(P))}_{\mathcal{E}'}}{D^{2k-|P|}}\right).
    \label{eq:sm_omegaid}
\end{equation}

We now compare Eq.~\eqref{eq:sm_omegaid} with $F^{(k)}_{\mathcal{E}'}$. Since for each block $B$, $|B|-1\geq \mathrm{min}(l_B,r_B)$, we have
\begin{equation}
    2k-|P|=\sum_{B\in P}(|B|-1)\geq \sum_{B\in P}\mathrm{min}(l_{B},r_{B})=k-a_{\mathrm{min}}(P).
\end{equation}
Thus
\begin{equation}
    O\left(\frac{F^{(a_{\mathrm{min}}(P))}_{\mathcal{E}'}}{D^{2k-|P|}}\right)\leq O\left(\frac{F^{(a_{\mathrm{min}}(P))}_{\mathcal{E}'}}{D^{k-a_{\mathrm{min}}(P)}}\right)\leq O(F^{(k)}_{\mathcal{E}'}).
\end{equation}
The second inequality follows from the fact that for any state ensemble $\mathcal{E}$, we have $F^{(m+1)}_{\mathcal{E}}\geq F^{(m)}_{\mathcal{E}}/D$. Intuitively, this is because the frame potential may decrease in its order, but the rate at which it decreases is at most $O(1/D)$, since the Haar ensemble has the smallest frame potential and decrease with the rate $O(1/D)$. If the frame potential of some ensemble decreased faster than $O(1/D)$, it would become smaller than the Haar value at some order $m$. We now prove this relation rigorously. For any state ensemble $\mathcal{E}$, we have such inequality:
\begin{equation}
    \begin{aligned}(F^{(k)}_{\mathcal{E}})^{2}&=\left(\mathop{\mathbb{E}}_{\ket{\psi} ,\ket{\phi} \sim \mathcal{E}}|\langle \psi|\phi \rangle |^{2k}\right)^{2}=\left(\mathop{\mathbb{E}}_{\ket{\psi} ,\ket{\phi} \sim \mathcal{E}}|\langle \psi|\phi \rangle |^{(k-1)}|\langle \psi|\phi \rangle |^{(k+1)}\right)^{2}\\
    &\leq \left(  \mathop{\mathbb{E}}_{\ket{\psi} ,\ket{\phi} \sim \mathcal{E}}|\langle \psi|\phi \rangle |^{2(k-1)}\right) \left(  \mathop{\mathbb{E}}_{\ket{\psi} ,\ket{\phi} \sim \mathcal{E}}|\langle \psi|\phi \rangle |^{2(k+1)}\right)=F^{(k-1)}_{\mathcal{E}}F^{(k+1)}_{\mathcal{E}},
    \end{aligned}
\end{equation}
where we use the Cauchy-Schwarz inequality. This inequality implies
\begin{equation}
    \frac{F^{(k+1)}_{\mathcal{E}}}{F^{(k)}_{\mathcal{E}}}\geq\frac{F^{(k)}_{\mathcal{E}}}{F^{(k-1)}_{\mathcal{E}}}\geq \cdots \geq F^{(1)}_{\mathcal{E}}\geq F^{(1)}_{\mathrm{Ha ar}}=1/D.
\end{equation}

It remains to consider the contribution from $\omega\neq \mathrm{id}$. We need to bound $\mathrm{Wg}(\omega ^{-1},D)F^{(a(\omega \tilde{h}))}_{\mathcal{E}'}$. Let $\#\omega$ denote the number of cycles in $\omega$ and define $|\omega|:= 2k-\#\omega$, which is the minimal number of swaps to express $\omega$. The Weingarten function satisfies $\mathrm{Wg}(\omega^{-1},D)=O(D^{-2k-|\omega|})$. For a single swap $\tau$, we have
\begin{equation}
    a(\tau \tilde{h})\geq a(\tilde{h})-1
\end{equation}
since a single swap can at most move one element from $L$ to $R$. Using this inequality $|\omega|$ times, we get
\begin{equation}
    a(\omega \tilde{h})\geq a(\tilde{h})-|\omega|.
\end{equation}
Thus we have
\begin{equation}
    \mathrm{Wg}(\omega ^{-1},D)F^{(a(\omega \tilde{h}))}_{\mathcal{E}'}\leq O(F_{\mathcal{E}'}^{(a(\tilde{h})-|\omega|)}/D^{2k+|\omega|})\leq O(F_{\mathcal{E}'}^{(a(\tilde{h}))}/D^{2k})
\end{equation}
and we reduce to the $\omega=\mathrm{id}$ case.

\section{Stabilizer state initial ensemble}
\label{ap:stabilizer}
In this section, we consider the ensemble $\mathcal{E}=\{e^{-iHt}\ket{\psi}\mid t\sim \mathrm{Unif}[0,T],\ket{\psi}\sim \mathcal{E}' \}$ and take the initial ensemble $\mathcal{E}'$ to be the random stabilizer-state ensemble. Assuming $H$ satisfies the $k$-th no-resonance condition and taking the limit $T\to \infty$, we prove that the evolved ensemble $\mathcal{E}$ forms a state $k$-design when the eigenbasis of $H$ is a Clifford rotation of the computational basis, i.e., $\ket{E_{n}}=C\ket{n}$, where $\{\ket{n}\}$ denotes the computational basis and $C$ is an arbitrary Clifford unitary. We also prove that, in order to construct a computational-basis diagonal Hamiltonian satisfying the $k$-th no-resonance condition, one needs $\Omega(\log k)$-body interactions.

First we review several facts about the $k$-th moment operator of the random stabilizer-state ensemble~\cite{Zhang_2026,Gross_2021,bittel2025completetheorycliffordcommutant}. Throughout this section  we assume the number of qubits $N\geq k-1$. The $k$-th moment operator of the random stabilizer-state ensemble can be written as
\begin{equation}
    \rho ^{(k)}_{\mathcal{E}'}=\frac{1}{Z_{N}}\sum _{\Omega \in \Sigma_{k,k}}\Omega, 
\end{equation}
where $Z_{N}=D\prod _{i=0}^{k-2}(D+2^{i})=D^{k}+O(D^{k-1})$ and $\Sigma_{k,k}$ is a linearly independent  basis of the Clifford commutant.

The elements of $\Sigma_{k,k}$ can be described as follows. Let $T\subset \mathbb{F}_2^{2k}$ be a subspace and  label the elements of $T$ as $(x,y)$, with $x,y\in\mathbb{F}_2^k$. $T$ is a Lagrangian subspace if it satisfies the following three conditions: (1) $|x|=|y|(\mathrm{mod} \,4)$ for all $(x,y)\in T$; (2) $\mathrm{dim}(T)=k$; (3) $(1,1,\cdots,1)\in T$. For each Lagrangian subspace, we can define an operator
\begin{equation}
    R(T)=\left(\sum_{(x,y)\in T}\ket{x}\bra{y}\right)^{\otimes N}.
\end{equation}
The set of all such operators forms $\Sigma_{k,k}$, whose size is $|\Sigma_{k,k}|=\prod _{i=0}^{k-2}(2^{i}+1)$, independent of $N$. 

For $k\leq3$, we have $|\Sigma _{k,k}|=k!$ and the elements of $\Sigma _{k,k}$ are precisely the permutation operators. For $k>3$, however,  $|\Sigma _{k,k}|>k!$, which means the Clifford commutant contains additional elements beyond permutations. For example, $|\Sigma _{4,4}|=30$. Besides the $24$ permutations, there are 6 additional elements: $\Omega _{4}$, $\Omega _{4}V_{D}((12))$, $\Omega _{4}V_{D}((13))$, $\Omega _{4}V_{D}((14))$, $\Omega _{4}V_{D}((123))$, $\Omega _{4}V_{D}((132))$, where $\Omega _{4}=(1/D)\sum_{P\in \mathcal{P}_N}P^{\otimes 4}$. This reflects the well-known  fact that the set of stabilizer states forms a state $3$-design, but not a $4$-design.

We will also use the following trace properties of the elements in $\Sigma _{k,k}$. An element $\Omega \in\Sigma_{k,k}$ satisfies $\mathrm{tr}(\Omega)=D^k$ if and only if $\Omega$ is the identity operator; otherwise $\mathrm{tr}(\Omega)=O(D^{k-1})$. Moreover,
\begin{equation}
    \mathrm{tr}(\Omega ^{\dagger}\Omega')\begin{cases}
=D^{k},&\Omega=\Omega';\\ \\
\leq D^{k-2},&\Omega \neq \Omega'.
\end{cases}
\end{equation}
Therefore, the frame potential of the stabilizer states is
\begin{equation}
    F^{(k)}_{\mathcal{E}'}=\mathrm{tr}[(\rho ^{(k)}_{\mathcal{E}'})^2]=\frac{1}{Z_N^2}\sum_{\Omega,\Omega'\in \Sigma_{k,k}}\mathrm{tr}(\Omega^\dagger\Omega')=\frac{|\Sigma_{k,k}|}{D^k}+O(1/D^{k+1}).
\end{equation}

Now we analyze the frame potential of the evolved ensemble. Starting from Eq.~\eqref{eq:sm_derive_eq2} and using \(1/\Omega(\vec n)\leq 1\), we obtain
\begin{equation}
    \begin{aligned}
        F^{(k)}_{\mathcal{E}}
        &\leq
        k!\sum_{\vec n}
        \left(
        \bra{E_{n_1}\cdots E_{n_k}}
        \rho^{(k)}_{\mathcal{E}'}
        \ket{E_{n_1}\cdots E_{n_k}}
        \right)^2 \\
        &=
        \frac{k!}{Z_N^2}
        \sum_{\Omega,\Omega'\in\Sigma_{k,k}}
        \sum_{\vec n}
        \bra{E_{n_1}\cdots E_{n_k}}\Omega
        \ket{E_{n_1}\cdots E_{n_k}}
        \bra{E_{n_1}\cdots E_{n_k}}\Omega'
        \ket{E_{n_1}\cdots E_{n_k}} .
    \end{aligned}
\end{equation}
We first take the eigenbasis of \(H\) to be the computational basis, i.e.,
\(\ket{E_n}=\ket n\). In this case, the leading contribution to the above upper bound comes only from
\(\Omega=\Omega'=\mathrm{id}\). To see this, note that, by the definition of \(\Sigma_{k,k}\), the matrix elements of every \(\Omega\in\Sigma_{k,k}\) in the computational basis are either \(0\) or \(1\). Therefore,
\begin{equation}
    \sum_{\vec n}
    \bra{n_1\cdots n_k}\Omega\ket{n_1\cdots n_k}
    \bra{n_1\cdots n_k}\Omega'\ket{n_1\cdots n_k}
    \leq
    \min\{\mathrm{tr}(\Omega),\mathrm{tr}(\Omega')\}.
\end{equation}
Using the  trace properties of $\Omega$, we obtain $F^{(k)}_{\mathcal{E}}\leq k!/D^k+O(1/D^{k+1})$. And  since the Haar ensemble has the smallest frame potential with the leading order $k!/D^k$, the leading order of $F^{(k)}_{\mathcal{E}}$ has to be $k!/D^k$.

A direct corollary is that we can also choose the eigenstates to be $\ket{E_n}=C\ket{n}$, where $C$ is an arbitrary Clifford unitary, since $C^{\dagger\otimes k} \Omega C^{\otimes k}=\Omega$ is true for all Clifford $C$ and all $\Omega\in \Sigma_{k,k}$.

We now ask what conditions are required for a computational-basis diagonal Hamiltonian to satisfy the $k$-th no-resonance condition. Note that such Hamiltonians can be written only with Pauli matrices $I$ and $Z$. If we demand the Hamiltonian to be local in 1-D geometry and only contain few-body interactions, we can never achieve the non-resonance condition of order $2$. If we discard locality and insist on few-body interactions, a general $r$-body Hamiltonian can be written in the form
\begin{equation}
    H=\sum_{|S|\leq r}J_S Z_S
\end{equation}
where $S$ are subsets of $[N]:=\{1,2,\cdots,N\}$ and $Z_S=\prod_{i\in S}Z_i$. In fact, we have the following theorem:
\begin{theorem}
  Consider Hamiltonians whose eigenbasis is the computational basis, which we refer to as classical Hamiltonians. No $r$-body Hamiltonian can satisfy the $2^r$-th no-resonance condition. Conversely, there exist $r$-body Hamiltonians satisfying the $k$-th no-resonance condition for any $k<2^r$. This implies, to achieve the $k$-th  no-resonance condition, we need a $(\lfloor \log_2k\rfloor+1)$-body Hamiltonian.
\end{theorem}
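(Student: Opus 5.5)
The strategy is to turn the $k$-th no-resonance condition for a classical Hamiltonian into a statement about integer-valued functions on the hypercube $\mathbb{F}_2^N$, and then read off both directions from a Fourier/Reed--Muller support bound. Write a general $r$-body classical Hamiltonian as $H=\sum_{|S|\le r}J_SZ_S$, so the eigenvalue of the computational basis state $\ket{x}$ is $E(x)=\sum_{|S|\le r}J_S\chi_S(x)$ with $\chi_S(x)=(-1)^{\sum_{i\in S}x_i}$. Take two $k$-tuples $\vec n,\vec m$ and encode them by the integer function $g(x)=\#\{a:n_a=x\}-\#\{a:m_a=x\}$. The $k$-th no-resonance condition then says the following: for every nonzero $g$ of this form, $\sum_x g(x)E(x)=\sum_{|S|\le r}J_S\,\hat g(S)\neq 0$, where $\hat g(S):=\sum_x g(x)\chi_S(x)$. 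Such a $g$ satisfies $\sum_x|g(x)|\le 2k$ and $\sum_x g(x)=0$, and $g=0$ exactly when the two multisets coincide.

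\emph{Impossibility at order $2^r$.} Assume $N\ge r+1$. I will exhibit an explicit resonance. Fix coordinates $1,\dots,r+1$ and let $Q$ be the $2^{r+1}$ bitstrings that vanish outside these coordinates. Split $Q$ into $A$ (even Hamming weight) and $B$ (odd Hamming weight), each of size $2^r$. For $g=\mathbf 1_A-\mathbf 1_B$ the Fourier coefficient factorizes:
\[
\hat g(S)=\sum_{y\in\mathbb{F}_2^{r+1}}(-1)^{|y|}\chi_S(y)=\prod_{i=1}^{r+1}\bigl(1-(-1)^{[i\in S]}\bigr).
\]
This product vanishes unless $S\supseteq\{1,\dots,r+1\}$, which is impossible when $|S|\le r$. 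Hence $\sum_{x\in A}E(x)=\sum_{x\in B}E(x)$ for every choice of couplings $J_S$. Since $A$ and $B$ are disjoint, these are two distinct multisets of size $2^r$, so no $r$-body classical Hamiltonian satisfies the $2^r$-th no-resonance condition.

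\emph{Existence for $k<2^r$.} Here the key step, which I expect to be the main obstacle, is a support lemma: if $g\neq0$ and $\hat g(S)=0$ for all $|S|\le r$, then $|\mathrm{supp}\,g|\ge 2^{r+1}$. This is the real-valued analogue of the minimum distance of the dual Reed--Muller code. I plan to prove it by induction on $N$, splitting $g$ into its restrictions $g_0,g_1$ on $x_N=0,1$.
\begin{itemize}
\item The function $g_0+g_1$ on $\mathbb{F}_2^{N-1}$ is orthogonal to all characters of degree $\le r$.
\item The function $g_0-g_1$ is orthogonal to all characters of degree $\le r-1$.
\item If $g_0+g_1\neq0$, then by induction $|\mathrm{supp}\,g|\ge|\mathrm{supp}(g_0+g_1)|\ge2^{r+1}$.
\item Otherwise $g_1=-g_0$ with $g_0\neq0$ orthogonal to degree $\le r-1$, so $|\mathrm{supp}\,g|=2|\mathrm{supp}\,g_0|\ge 2\cdot2^r$.
\item The base cases are $r=0$ (a nonzero $g$ with zero sum has at least two support points) and small $N$ (where the hypothesis forces $g=0$).
\end{itemize}

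Granting the lemma, any nonzero $g$ arising from $k$-tuples has $|\mathrm{supp}\,g|\le\sum_x|g(x)|\le 2k<2^{r+1}$. So some $\hat g(S)$ with $|S|\le r$ is nonzero, and the resonance condition $\sum_S J_S\hat g(S)=0$ cuts out a proper hyperplane in the space of couplings $(J_S)_{|S|\le r}$. There are only finitely many such $g$, because the tuples range over a finite set. Therefore Lebesgue-almost every choice of $J_S$ avoids all these hyperplanes and satisfies the $k$-th no-resonance condition. In fact this gives the slightly stronger range $k\le 2^r-1$.

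\emph{Consequence.} Combining the two directions, the $k$-th condition is achievable with $r$-body terms if and only if $k<2^r$, that is, $r\ge\lfloor\log_2k\rfloor+1$. This is the claimed $\Omega(\log k)$ body-count requirement.
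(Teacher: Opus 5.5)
Your proof is correct and follows the paper's route. The resonance at order $2^r$ comes from the same parity split of the $2^{r+1}$ bitstrings supported on $r+1$ coordinates. Existence for $k<2^r$ rests on the same two ingredients: only finitely many proper hyperplanes in coupling space need to be avoided, and a nonzero function orthogonal to all characters of degree $\le r$ has support at least $2^{r+1}$. The one real difference is that you prove this support bound yourself, by a Plotkin/Reed--Muller-type induction on $N$ that is valid as written. The paper instead multiplies $\mu$ by $\chi_{[N]}$ to get a function of degree $\le N-r-1$ and cites Lemma~3.5 of O'Donnell. A small correction: $k\le 2^r-1$ is the same condition as $k<2^r$ for integer $k$, not a slightly stronger range.
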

\begin{proof}
    To prove the theorem, we need some facts about Boolean functions~\cite{odonnell2021analysisbooleanfunctions}. A real-valued Boolean function is a function $f:\{ -1,1 \}^{N}\to \mathbb{R}$. All such functions form a real vector space with dimension $2^{N}$, which we denote by $V_N$. We equip $V_N$ with the inner product:
    \begin{equation}
        \langle f,g\rangle=2^{-N}\sum _{x \in \{ -1,1 \}^{N}}f(x)g(x).
    \end{equation}
    For each subset $S\subset [N]$, define
    \begin{equation}
        \chi _{S}(x):=\prod _{i\in S}x_{i},
    \end{equation}
    which is a function in $V_N$. All such functions form an orthonormal basis of $V_{N}$. Therefore any function $f$ in $V_N$ can be expanded as 
    \begin{equation}
        f=\sum_{S\subset [N]}\langle \chi_S,f\rangle \chi_S.
    \end{equation}
This expansion is called the Fourier expansion of $f$, and the coefficients $\langle \chi_S,f\rangle$ are called Fourier coefficients. Additionally, the degree of $f$ is defined as
$\mathrm{deg}(f)=\mathrm{max}_{\langle \chi_S,f\rangle\neq0}(|S|)$. The support of $f$ is defined as $\mathrm{supp}(f)=\{x\in \{-1,1\}^N\mid    f(x)\neq 0   \}$.

The eigenenergy of an $r$-body classical Hamiltonian can be written as
\begin{equation}
    E(x)=\sum _{|S|\leq r}J_{S}\chi _{S}(x)
    \label{eq:sm_rlocal}
\end{equation}
where $x\in \{-1,1\}^N$, which is a real-valued Boolean function with $\mathrm{deg}(E)\leq r$. We first prove that such a Hamiltonian cannot satisfy the $2^r$-th no-resonance condition. Choose a subset $A\subset [N]$ with $|A|=r+1$. Without loss of generality, we take $A=\{1,2,\cdots,r+1\}$. Now fix the values of all bits outside of $A$:
\begin{equation}
    x_i=s_i,\quad i=r+2,\cdots,N.
\end{equation}
This gives a restricted function $E_A\in V_{r+1}$ defined by
\begin{equation}
    E_A(z)=E(z,s_{r+2},\cdots,s_N),\quad z\in \{-1,1\}^{r+1}.
\end{equation}
Note that $\mathrm{deg}(E_A)\leq r$, which follows directly by substituting Eq.~\eqref{eq:sm_rlocal} into the definition of $E_A$. Therefore we have
\begin{equation}
    \langle \chi _{A},E_{A}\rangle=2^{-r-1}\sum _{z \in \{ -1,1 \}^{r+1}}\chi _{A}(z)E_{A}(z)=0.
    \label{eq:sm_no_r_plus_one}
\end{equation}
Here $\chi _{A}(z)=z_1\cdots z_{r+1}$ is a function in $V_{r+1}$. Use $\chi _{A}$ to divide $\{ -1,1 \}^{r+1}$ into two sets:
\begin{equation}
    P_{+}=\{ z\in \{ -1,1 \}^{r+1}\mid \chi _{A}(z)=1 \}\quad P_{-}=\{ z\in \{ -1,1 \}^{r+1}\mid \chi _{A}(z)=-1 \}.
\end{equation}
Clearly, $|P_{+}|=|P_{-}|=2^{r}$. Then Eq.~\eqref{eq:sm_no_r_plus_one} implies
\begin{equation}
    \sum _{z\in P_{+}}E_{A}(z)=\sum _{z\in P_{-}}E_{A}(z).
\end{equation}
Writing this equation with the original energy function $E$, we get
\begin{equation}
    \sum _{z\in P_{+}}E(z,s_{r+2},\cdots ,s_{N})=\sum _{z\in P_{-}}E(z,s_{r+2},\cdots ,s_{N}),
\end{equation}
which implies any $r$-body classical Hamiltonian can't satisfy the $2^r$-th no-resonance condition.

Now we prove the converse direction by contradiction. Fix $k<2^r$, and suppose that no $r$-body Hamiltonian satisfies the $k$-th no-resonance condition. In other words,  for any energy function $E$ defined in Eq.~\eqref{eq:sm_rlocal}, there exist $X=\{x^{(1)},\cdots,x^{(k)}\}$ and $Y=\{y^{(1)},\cdots,y^{(k)}\}$ such that
\begin{equation}
    E(x^{(1)})+\cdots +E(x^{(k)})=E(y^{(1)})+\cdots +E(y^{(k)})
    \label{eq:sm_no_reso}
\end{equation}
and $X\neq Y$ in the sense of multisets. Here, $x^{(1)},\cdots,x^{(k)},y^{(1)},\cdots,y^{(k)}\in \{-1,1\}^N$. For any $|S|\leq r$, define 
\begin{equation}
    \Delta_S(X,Y):=\sum_{a=1}^k\chi_S(x^{(a)})-\sum_{a=1}^k\chi_S(y^{(a)}).
\end{equation}
Then Eq.~\eqref{eq:sm_no_reso} can be written as
\begin{equation}
    \sum_{|S|\leq r}J_S \Delta_S(X,Y)=0,
    \label{eq:sm_JDel}
\end{equation}
and our assumption can be restated as follows: for every choice of the Hamiltonian parameters $J_S$, there exist $X\neq Y$ such that Eq.~\eqref{eq:sm_JDel} holds.

We regard the Hamiltonian coefficients as coordinates of a real vector space $W$ with dimension $d=\sum_{i=0}^r\binom{N}{i}$. For each fixed  pair $X\neq Y$, let $W_{(X,Y)}\subset W$ be the set of parameters satisfying  Eq.~\eqref{eq:sm_JDel}. If $\Delta_S(X,Y)$ are not all zero, $W_{(X,Y)}$ is a subspace of $W$ with dimension $d-1$. If $\Delta_S(X,Y)=0$ for all $|S|\leq r$, we have $W_{(X,Y)}=W$. The previous assumption amounts to saying that
\begin{equation}
    W=\bigcup_{X\neq Y}W_{(X,Y)}.
\end{equation}
A $d$-dimensional real vector space cannot be expressed as the union of finitely many subspaces whose dimensions are all strictly less than $d$. Thus there must exist a pair of $X\neq Y$ such that $W_{(X,Y)}=W$. Equivalently, for this pair $X\neq Y$, we have
\begin{equation}
    \Delta_S(X,Y)=\sum_{a=1}^k\chi_S(x^{(a)})-\sum_{a=1}^k\chi_S(y^{(a)})=0
\end{equation}
for any $|S|\leq r$.

Then we prove that if there exists a pair of $X\neq Y$ such that $\Delta_S(X,Y)=0$ for all $|S|\leq r$, we must have $k\geq 2^r$.  
Define a function $\mu \in V_{N}$ by
\begin{equation}
    \mu(z)=\#\{ a:x^{(a)}=z \}-\#\{ a:y^{(a)}=z \}.
\end{equation}
The two multisets $X$ and $Y$ are equal if and only if for any $z\in \{ -1,1 \}^{N}$,  $\mu(z)=0$. Since $X$ and $Y$ each contain $k$ elements, we also have $|\mathrm{supp}(\mu)|\leq 2k$. Using the definition of $\mu$, for any $|S|\leq r$, we have
\begin{equation}
    0=\sum _{a=1}^{k}\chi _{S}(x^{(a)})-\chi _{S}(y^{(a)})=\sum _{z\in \{ -1,1 \}^{N}}\mu(z)\chi _{S}(z)=2^N\langle \mu,\chi_S\rangle,
\end{equation}
which means that the Fourier coefficient of $\mu$ is zero for any $|S|\leq r$.   

Next, define a new function $g(x)=\chi _{[N]}(x)\mu(x)$, where $\chi _{[N]}(x)=\prod _{a=1}^{N}x_{a}$. Since $\chi_{[N]}\neq 0$ for any $x\in\{-1,1\}^N$, we have $\mathrm{supp}(g)=\mathrm{supp}(\mu)$. Moreover,  $\mathrm{deg}(g)\leq N-r-1$ since $\chi_S\cdot\chi_{[N]}=\chi_{[N]\setminus S}$. From Lemma 3.5 of Ref.~\cite{odonnell2021analysisbooleanfunctions}, we know
\begin{equation}
    |\mathrm{supp}(\mu)|=|\mathrm{supp}(g)|\geq 2^{r+1}
\end{equation}
Then we get $k\geq 2^r$, which contradicts the assumption $k<2^r$.
\end{proof}

A closely related logarithmic threshold appears in the study of diagonal-unitary designs~\cite{Nakata_2014}, where the authors proved that an \(r\)-qubit phase-random diagonal circuit forms an exact diagonal-unitary \(k\)-design if and only if $r\geq \lfloor \log_2 k\rfloor+1$.

\section{Proof of Theorem~3}
\label{ap:Theorem3}
In this section, we prove Theorem 3 in the main text.

We first consider the frame potential of the ensemble $\mathcal{E}_{L}=\{ e^{-iHt}V\mid t\sim \mathrm{Unif}[0,T],V\sim \mathcal{E}' \}$, which is
\begin{equation}
\mathcal{F}^{(k)}_{\mathcal{E}_L}=\mathop{\mathbb{E}}\limits_{V,V'\sim \mathcal{E}'}\,\mathop{\mathbb{E}}\limits_{t,t'} |\mathrm{Tr}[V'^{\dagger}e^{iHt'}e^{-iHt}V]|^{2k}.
\end{equation}
Assuming $T\to \infty$ and that the Hamiltonian $H$ satisfies the $k$-th no-resonance condition, we can perform the time average in the same way as in the proof of Theorem 1 and obtain
\begin{equation} \mathcal{F}^{(k)}_{\mathcal{E}_L}=\mathop{\mathbb{E}}\limits_{V,V'\sim \mathcal{E}'}k!\sum _{\vec{n}}\frac{1}{\Omega(\vec{n})}\bra{E_{n_{1}}\cdots E_{n_{k}}E_{n_{1}}\cdots E_{n_{k}}} (VV'^{\dagger})^{\otimes k}\otimes(V'V^{\dagger})^{\otimes k}\ket{E_{n_{1}}\cdots E_{n_{k}}E_{n_{1}}\cdots E_{n_{k}}}. 
\end{equation}
Averaging over the GUE eigenbasis by Weingarten calculus, we get
\begin{equation}
    \begin{aligned}
        \mathop{\mathbb{E}}\limits_{H\sim \mathrm{GUE}}\mathcal{F}^{(k)}_{\mathcal{E}_L}=&k!\sum _{\pi,\sigma \in S_{2k}}\mathrm{Wg}(\pi ^{-1}\sigma,D)\mathop{\mathbb{E}}\limits_{V,V'\sim \mathcal{E}'}\mathrm{Tr}[V^{\dagger}_{D}(\sigma)(VV'^{\dagger})^{\otimes k}\otimes(V'V^{\dagger})^{\otimes k}] \\
&\quad \times \sum _{\vec{n}}\frac{1}{\Omega(\vec{n})}\bra{n_{1}\cdots n_{k}n_{1}\cdots n_{k}}V_{D}(\pi)\ket{n_{1}\cdots n_{k}n_{1}\cdots n_{k}}.
    \end{aligned}
    \label{eq:sm_unitarygue}
\end{equation}
We now analyze the leading-order contribution of Eq.~\eqref{eq:sm_unitarygue}. To simplify notation, let $W:= VV'^{\dagger}$ and $ \mathbb{E}_{\mathcal{E}'}:=\mathop{\mathbb{E}}\limits_{V,V'\sim \mathcal{E}'} $. Define
\begin{equation}
    \begin{aligned}
        &T_{\sigma}(W):=\mathrm{Tr}[V^{\dagger}_{D}(\sigma)(W^{\otimes k}\otimes W^{\dagger \otimes k})],\\
        &S_\pi :=\sum _{\vec{n}}\frac{1}{\Omega(\vec{n})}\bra{n_{1}\cdots n_{k}n_{1}\cdots n_{k}}V_{D}(\pi)\ket{n_{1}\cdots n_{k}n_{1}\cdots n_{k}}.
    \end{aligned}
\end{equation}
Then Eq.~\eqref{eq:sm_unitarygue} becomes
\begin{equation}
    \mathop{\mathbb{E}}\limits_{H\sim \mathrm{GUE}}\mathcal{F}^{(k)}_{\mathcal{E}_L}=k!\sum _{\pi,\sigma \in S_{2k}}\mathrm{Wg}(\pi ^{-1}\sigma,D)\mathbb{E}_{\mathcal{E}'}[T_\sigma(W)]S_\pi.
    \label{eq:sm_ushort}
\end{equation}
First, summing all terms with $\pi=\sigma\in E_{2k}$ and keeping only the leading-order contribution,  we obtain
\begin{equation}
    k!\left( 1+\sum _{m=1}^{k}\binom{k}{m}\frac{\mathcal{F}^{(m)}_{\mathcal{E}'}}{D^m} \right).
    \label{eq:sm_leadunitary}
\end{equation}
Below we prove this is precisely the leading-order contribution of Eq.~\eqref{eq:sm_unitarygue}.

There are two cases. The first case is that $\sigma\in E_{2k}$ but $\pi\neq\sigma$. In this case, $\mathbb{E}_{\mathcal{E}'}[T_\sigma(W)]=D^{k-m}\mathcal{F}^{(m)}_{\mathcal{E}'}$ for some $m$, while $S_\pi$ is at most $O(D^k)$ and the Weingarten function is at most $O(D^{-2k-1})$ since $\pi\neq \sigma$. Therefore the total contribution of such a term is $O(\mathcal{F}^{(m)}_{\mathcal{E}'}/D^{m+1})$, which is smaller by at least  a factor of $1/D$ than the corresponding term in Eq.~\eqref{eq:sm_leadunitary}.

The second case is that $\sigma\notin E_{2k}$. Write the cycle decomposition of $\sigma$ as $\sigma=c_1c_2\cdots c_l$. For a cycle $c$, let $\#_L(c)$  denote the number of elements of $\{ 1,\cdots,k \}$ contained in $c$ and let $\#_R(c)$ denote the number of elements of  $\{ k+1,\cdots,2k \}$ contained in $c$. Then
\begin{equation}
    T_\sigma(W)=\prod _{i=1}^{l}\mathrm{tr}[W^{\#_{L}(c_{i})}W^{\dagger\#_{R}(c_{i})}].
\end{equation}
Let $M(\sigma)$ be the number of cycles in $\sigma$ satisfying $|\#_L(c)-\#_{R}(c)|=1$ (Cycles of length one, also called fixed points of $\sigma$, are examples of such cycles). Then we can bound the norm of $\mathbb{E}_{\mathcal{E}'}[T_\sigma(W)]$:
\begin{equation}
    |\mathbb{E}_{\mathcal{E}'}[T_\sigma(W)]|\leq \mathbb{E}_{\mathcal{E}'}|T_\sigma(W)|\leq D^{l-M(\sigma)}\mathbb{E}_{\mathcal{E}'}[|\mathrm{tr}W|^{M(\sigma)}],
\end{equation}
where we bound the contribution of cycles that do not satisfy $|\#_L(c)-\#_{R}(c)|=1$ by $D$. To make $|\mathrm{Wg}(\pi ^{-1}\sigma,D)\mathbb{E}_{\mathcal{E}'}[T_\sigma(W)]S_\pi|$ as large as possible, we can either choose $\pi=\sigma$ to maximize the Weingarten function or choose $\pi\in E_{2k}$ to maximize $S_\pi$. But since we have assumed $\sigma \notin E_{2k}$, these two maximizations cannot be achieved simultaneously, and we get
\begin{equation}
    |\mathrm{Wg}(\pi ^{-1}\sigma,D)\mathbb{E}_{\mathcal{E}'}[T_\sigma(W)]S_\pi|=O\left(\frac{\mathbb{E}_{\mathcal{E}'}[|\mathrm{tr}W|^{M(\sigma)}]}{D^{k+1-l+M(\sigma)}}\right).
\end{equation}

If $M(\sigma)$ is even, we have
\begin{equation}
    \frac{\mathbb{E}_{\mathcal{E}'}[|\mathrm{tr}W|^{M(\sigma)}]}{D^{k+1-l+M(\sigma)}}=\frac{\mathcal{F}^{(M(\sigma)/2)}_{\mathcal{E}'}}{D^{k+1-l+M(\sigma)}}.
    \label{eq:sm_even1}
\end{equation}
We need to compare it with the corresponding term in Eq.~\eqref{eq:sm_leadunitary}, which is
\begin{equation}
    \frac{\mathcal{F}^{(M(\sigma)/2)}_{\mathcal{E}'}}{D^{M(\sigma)/2}}.
    \label{eq:sm_even2}
\end{equation}
In fact we can derive a simple relation  between $M(\sigma)$, the number of cycles $l$ and $k$. Since all $M(\sigma)$ cycles in $\sigma$ have length at least $1$ while the remaining $l-M(\sigma)$ cycles have length at least $2$, we have $2k\geq M(\sigma)+2(l-M(\sigma))$, or equivalently, 
\begin{equation}
    k\geq l-\frac{M(\sigma)}{2}.
    \label{eq:sm_ineq}
\end{equation}
Using this inequality, we realize Eq.~\eqref{eq:sm_even1} is at least $1/D$ smaller than Eq.~\eqref{eq:sm_even2}.

If $M(\sigma)$ is odd, we have
\begin{equation}
    \frac{\mathbb{E}_{\mathcal{E}'}[|\mathrm{tr}W|^{M(\sigma)}]}{D^{k+1-l+M(\sigma)}}\leq \frac{\mathcal{F}^{((M(\sigma)-1)/2)}_{\mathcal{E}'}}{D^{k-l+M(\sigma)}}.
    \label{eq:sm_odd1}
\end{equation}
We need to compare it with the corresponding term in Eq.~\eqref{eq:sm_leadunitary}, which is
\begin{equation}
    \frac{\mathcal{F}^{((M(\sigma)-1)/2)}_{\mathcal{E}'}}{D^{(M(\sigma)-1)/2}}.
    \label{eq:sm_odd2}
\end{equation}
Since $M(\sigma)$ is odd, the inequality Eq.~\eqref{eq:sm_ineq} implies
\begin{equation}
    k\geq l-\frac{M(\sigma)}{2}+\frac{1}{2}.
\end{equation}
Using this inequality, we realize Eq.~\eqref{eq:sm_odd1} is at least $1/D$ smaller than Eq.~\eqref{eq:sm_odd2}.

Now we go back to Eq.~\eqref{eq:sm_ushort}. For some $\pi$ and $\sigma$, $\mathrm{Wg}(\pi ^{-1}\sigma,D)\mathbb{E}_{\mathcal{E}'}[T_\sigma(W)]S_\pi$ is not real. However we can always find another pair $(\pi,\sigma)$ whose contribution is its complex conjugate, and summing them together gives twice its real part, making the frame potential a real number. For a complex number $z$, we have $|2\mathrm{Re}(z)|\leq 2|z|$. Thus by bounding their norms, we can bound the terms in Eq.~\eqref{eq:sm_ushort}. This finishes the proof of Theorem 3 for $\mathcal{E}_L$.

It remains to consider the ensemble $\mathcal{E}_R$.
In fact, after the GUE average, the frame potential of $\mathcal{E}_R$ is equal to that of $\mathcal{E}_L$. We have
\begin{equation}
\mathcal{F}^{(k)}_{\mathcal{E}_R}=\mathop{\mathbb{E}}\limits_{V,V'\sim \mathcal{E}'}k!\sum _{\vec{n}}\frac{1}{\Omega(\vec{n})}\bra{E_{n_{1}}\cdots E_{n_{k}}E_{n_{1}}\cdots E_{n_{k}}} (V'^\dagger V)^{\otimes k}\otimes(V^\dagger V')^{\otimes k}\ket{E_{n_{1}}\cdots E_{n_{k}}E_{n_{1}}\cdots E_{n_{k}}} ,
\end{equation}
which is not equal to $\mathcal{F}^{(k)}_{\mathcal{E}_L}$ in general. But if we take the GUE average, the result will be the same, since
\begin{equation}
    \begin{aligned}
        \mathrm{Tr}[V^{\dagger}_{D}(\sigma)(V'^\dagger V)^{\otimes k}\otimes(V^\dagger V')^{\otimes k}]&=\mathrm{Tr}[V^{\dagger}_{D}(\sigma) V'^{\otimes 2k}((V'^\dagger V)^{\otimes k}\otimes(V^\dagger V')^{\otimes k})V'^{\dagger \otimes 2k}]\\
        &=\mathrm{Tr}[V^{\dagger}_{D}(\sigma)(VV'^{\dagger})^{\otimes k}\otimes(V'V^{\dagger})^{\otimes k}]
    \end{aligned}
\end{equation}
where we use $[V_D^\dagger(\sigma),V'^{\otimes 2k}]=0$. This completes the proof of Theorem~3.

\end{document}